\documentclass{amsart}
\usepackage{amsmath, amsthm, amscd, amsfonts, amssymb, graphicx, color}
\input{mathrsfs.sty}
\usepackage{enumitem}
\usepackage[bookmarksnumbered, colorlinks, plainpages]{hyperref}
\textheight 22.5truecm \textwidth 15truecm
\setlength{\oddsidemargin}{0.35in}\setlength{\evensidemargin}{0.35in}
\setlength{\topmargin}{-.5cm}

\newtheorem{theorem}{Theorem}[section]
\newtheorem{lemma}[theorem]{Lemma}
\newtheorem{proposition}[theorem]{Proposition}
\newtheorem{corollary}[theorem]{Corollary}
\theoremstyle{definition}
\newtheorem{definition}[theorem]{Definition}

\theoremstyle{remark}
\newtheorem{remark}[theorem]{Remark}
\numberwithin{equation}{section}
\newcommand{\X}{ \mathbf{X}}
\newcommand{\A}{ \mathbf{A}}
\newcommand{\Y}{ \mathbf{Y}}
\newcommand{\I}{ \mathbf{I}}
\begin{document}

\title[  A Hierarchy of Entanglement Cones]{ A Hierarchy of Entanglement Cones via Rank-Constrained $C^*$-Convex Hulls}

\author{Mohsen Kian}
\address{Department of Mathematics, University of Bojnord, P. O. Box 1339, Bojnord 94531, Iran}
\email{kian@ub.ac.ir }

\subjclass[2020]{Primary 15A69, 46L07; Secondary 81P45}
\keywords{Separable cone,  positive partial transpose cone, $C^*$-convex, operator Schmidt rank}

\begin{abstract}
This paper systematically investigates the geometry of fundamental quantum cones, the separable cone ($\mathscr{P}_+$) and the Positive Partial Transpose (PPT) cone ($\mathcal{P}_{\mathrm{PPT}}$), under generalized non-commutative convexity. We   demonstrate  a sharp stability dichotomy   analyzing $C^*$-convex hulls  of these cones: while $\mathscr{P}_+$ remains stable under local $C^*$-convex combinations, its global $C^*$-convex hull collapses entirely to the cone of all positive semidefinite matrices, $\operatorname{MCL}(\mathscr{P}_+) = \mathscr{P}_0$. To gain finer control and classify intermediate structures, we introduce the concept of ``$k$-$C^*$-convexity'', by using the operator Schmidt rank of $C^*$-coefficients. This constraint defines a new hierarchy of nested intermediate cones, $\operatorname{MCL}_k(\cdot)$. We prove that this hierarchy precisely recovers the known Schmidt number cones for the separable case, establishing a generalized convexity characterization: $\operatorname{MCL}_k(\mathscr{P}_+) = \mathcal{T}_k$. Applied to the PPT cone, this framework generates a family of conjectured non-trivial intermediate cones, $\mathcal{C}_{\mathrm{PPT}, k}$.
\end{abstract}
\maketitle

\section{Introduction}

The geometry of the cone of positive semidefinite matrices, denoted $\mathscr{P}_0$, plays a fundamental role in matrix analysis and quantum information theory \cite{Ando2004, Horodecki1997}. Within this framework, the distinction between separable and entangled states is central to the characterization of quantum correlations. The cone of separable operators, $\mathscr{P}_+$, forms a proper subcone of $\mathscr{P}_0$, structured as the convex hull of product states \cite{Horodecki1996, TerhalHorodecki2000}. The dual of this cone, $\mathscr{P}_-$, consists of block-positive operators (entanglement witnesses), which are essential tools for detecting entanglement \cite{Horodecki1996, Peres1996}.

Despite its fundamental importance, the geometry of $\mathscr{P}_+$ remains notoriously difficult to characterize. The weak membership problem for the separable cone is NP-hard, a fact that has driven recent research into relaxation hierarchies and tensor optimization methods. For instance, recent work by Xu et al. \cite{Xu2025} has highlighted the computational intractability of projecting onto the cone of positive semidefinite tensors, necessitating new convex relaxation techniques. Similarly, geometric approaches using spectral matrix cones have been developed to better approximate these sets in high-dimensional optimization problems \cite{Oliveira2025}.

While the \emph{linear} convexity of these sets is the standard object of study, non-commutative generalizations of convexity offer a richer algebraic perspective that respects the operator structure of quantum mechanics. The notion of \emph{$C^*$-convexity}, introduced by Loebl and Paulsen \cite{LoeblPaulsen}, generalizes the concept of  classical convexity by allowing coefficients to be operators rather than scalars. Specifically, a set is $C^*$-convex if it is closed under conjugations $\sum A_i^* X_i A_i$ where the coefficients satisfy the structural condition $\sum A_i^* A_i = I$. Recent developments by Kennedy, Kim, and Manor \cite{Kennedy2021} and Kumar and Pandey \cite{Kumar2025} have confirmed $C^*$-convexity as the natural framework for analyzing stability properties of operator systems and completely positive maps.

However, in the specific context of bipartite systems $\mathbb{M}_{m} \otimes \mathbb{M}_{n}$, the interplay between tensor product structures and $C^*$-convexity is largely unexplored. While the full cone $\mathscr{P}_0$ is self-dual and $C^*$-convex, the behavior of the separable cone $\mathscr{P}_+$ and the Positive Partial Transpose (PPT) cone under these generalized operations poses significant questions regarding the algebraic fragility of entanglement classes.

In this paper, we provide a systematic study of the $C^*$-convexity of fundamental quantum cones. Our primary contribution is the introduction and investigation of a generalized convexity hierarchy that interpolates between the separable/PPT cones and the full positive cone.
We first establish a sharp boundary for the global stability of $\mathscr{P}_+$. We demonstrate that the \emph{global} $C^*$-convex hull of the separable cone collapses entirely to the full positive cone, $\operatorname{MCL}(\mathscr{P}_+) = \mathscr{P}_0$.
To gain finer control, we introduce the concept of $k$-$C^*$-convexity, parameterized by using the operator Schmidt rank (tensor rank) $k$ of the generating Kraus operators. This new framework generates two families of intermediate cones: $\mathcal{C}_{k} (\mathscr{P}_+) = \operatorname{MCL}_k(\mathscr{P}_+)$ and $\mathcal{C}_{\mathrm{PPT}, k} = \operatorname{MCL}_k(\mathcal{P}_{\mathrm{PPT}})$.

We prove that this hierarchy naturally recovers the Schmidt number cones $\mathcal{T}_k$ studied by Terhal and Horodecki \cite{TerhalHorodecki2000}. Specifically, we show that for the separable cone, $\mathcal{C}_{k} (\mathscr{P}_+) = \mathcal{T}_k$.
For the PPT cone, which contains the bound entangled states, the hierarchy offers a new avenue for classification.

\section{Preliminaries}
 Throughout the paper we work with finite-dimensional complex Hilbert spaces
$\mathbb{C}^{m}$ and $\mathbb{C}^{n}$ and their Hilbert space tensor product
$\mathbb{C}^{m}\otimes\mathbb{C}^{n}$. Let $\mathbb{M}_m$ and $\mathbb{M}_n$ denote the algebras of complex
$m \times m$ and $n \times n$ matrices, respectively, and set
\[
\mathbb{M}_{mn} := \mathbb{M}_m \otimes \mathbb{M}_n.
\]
 While we denote by uppercase letters for matrices in $\mathbb{M}_m$, we will use    bold uppercase letters to  emphasize  elements in  $\mathbb{M}_m\otimes \mathbb{M}_n$.
The full   cone $\mathbb{M}_{mn}^{+}$ of positive semidefinite matrices  is denoted by
$\mathscr{P}_0$.

A positive semidefinite operator $\X\in\mathbb{M}_{mn}^{+}$ is
\emph{separable} if it admits a representation
\[
\X=\sum_{j=1}^{N} \lambda_j\, (a_j\otimes b_j)(a_j\otimes b_j)^{*},
\qquad \lambda_j\ge 0.
\]
The cone of separable positive matrices is denoted by $\mathscr{P}_{+}$.
 It is well-known that $\mathscr{P}_0$ is self-dual under the duality coupling   $\langle X,Y\rangle=\mathrm{Tr}(X^*Y)$. Furthermore, we denote by $\mathscr{P}_-$ the dual cone of $\mathscr{P}_+$. i.e.
\[
\mathscr{P}_-
:= \left\{ \mathbf{W} = \mathbf{W}^\ast \in \mathbb{M}_{mn} :
\operatorname{Tr}(\mathbf{W}\,\mathbf{X}) \ge 0 \ \text{for all }
\mathbf{X} \in \mathscr{P}_+ \right\}.
\]
It is well known    \cite{Ando2004} that $\mathbf{W} \in \mathscr{P}_-$ if and
only if
\[
(z^\ast \otimes w^\ast)\,\mathbf{W}\,(z \otimes w) \ge 0
\quad\text{for all } z \in \mathbb{C}^m,\ w \in \mathbb{C}^n,
\]
i.e.\ $\mathbf{W}$ is \emph{block-positive}.

A set \(\mathcal{K} \subseteq \mathbb{M}_{d}\) is called \emph{$C^*$-convex} (see e.g. \cite{LoeblPaulsen}) if for any finite collection \(\{X_1, \ldots, X_k\} \subseteq \mathcal{K}\) and \(\{A_1, \ldots, A_k\} \subseteq \mathbb{M}_{d}\) satisfying \(\sum_{i=1}^k A_i^* A_i = I\), the sum \(\sum_{i=1}^k A_i^* X_i A_i \in \mathcal{K}\). This notion is a non-commutative extension of the linear convexity. Every $C^*$-convex set is linearly convex and the converse is not true in general. Typical examples of $C^*$-convex sets are the unit ball of the spectral norm and the numerical radius norm, see also \cite{kian2017}. For any set $\mathcal{K}\subseteq\mathbb{M}_d$, adopting the notation of \cite{LoeblPaulsen}, we denote by $\mathrm{MCL}(\mathcal{K})$ the smallest norm-closed $C^*$-convex set containing $\mathcal{K}$.

For a subset $\mathcal{K} \subseteq \mathbb{M}_{d}$, we define its
\emph{conic hull} by
\[
\operatorname{cone}(\mathcal{K})
:= \left\{
\sum_{j=1}^r \lambda_j X_j :
r \in \mathbb{N},\ \lambda_j \ge 0,\
X_j \in \mathcal{K}
\right\}.
\]
The norm-closure of the conic hull of $\mathcal{K}$ will be denoted by $\operatorname{ccone}(\mathcal{K})$.

 \section{ Stability and Collapse of the Separable Cone}
 This section initiates our investigation into the algebraic stability of the separable cone ($\mathscr{P}_+$) under $C^*$-convexity. We first demonstrate that $\mathscr{P}_+$ is robust under local $C^*$-convex combinations, consistent with its interpretation as states invariant under local quantum channels. We then establish the primary result of this section: the complete and surprising collapse of the separable cone to the full positive cone $\mathscr{P}_0$ under global, unconstrained $C^*$-convex combinations.
The following simple observation will be used repeatedly.

\begin{lemma}
\label{lem:MCL-in-PSD}
We have $\operatorname{MCL}(\mathscr{P}_+) \subseteq \mathscr{P}_0$.
\end{lemma}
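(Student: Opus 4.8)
The plan is to show that $\mathscr{P}_0$ is itself $C^*$-convex, and then invoke the minimality of $\operatorname{MCL}(\mathscr{P}_+)$ as the smallest norm-closed $C^*$-convex set containing $\mathscr{P}_+$. Since $\mathscr{P}_+ \subseteq \mathscr{P}_0$ by definition (every separable matrix is positive semidefinite), once we know that $\mathscr{P}_0$ is a norm-closed $C^*$-convex set, the desired inclusion $\operatorname{MCL}(\mathscr{P}_+) \subseteq \mathscr{P}_0$ follows immediately from the fact that $\operatorname{MCL}(\mathscr{P}_+)$ is contained in every norm-closed $C^*$-convex superset of $\mathscr{P}_+$.

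So the real content reduces to verifying that $\mathscr{P}_0$ is $C^*$-convex and norm-closed. Norm-closedness of the positive semidefinite cone is standard. For $C^*$-convexity, I would take any finite collection $\X_1, \ldots, \X_k \in \mathscr{P}_0$ together with coefficients $A_1, \ldots, A_k \in \mathbb{M}_{mn}$ satisfying $\sum_{i=1}^k A_i^* A_i = I$, and check that $\sum_{i=1}^k A_i^* \X_i A_i \in \mathscr{P}_0$. The key step is the elementary observation that for any positive semidefinite $\X_i$ and any matrix $A_i$, the conjugation $A_i^* \X_i A_i$ is again positive semidefinite, since $\xi^*(A_i^* \X_i A_i)\xi = (A_i\xi)^* \X_i (A_i\xi) \ge 0$ for every vector $\xi$. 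A finite sum of positive semidefinite matrices is positive semidefinite, so the whole expression lies in $\mathscr{P}_0$. Notably, here the normalization constraint $\sum_i A_i^* A_i = I$ plays no role at all; the positive cone is closed under arbitrary conjugation sums, which is exactly why it is so robust.

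I expect no serious obstacle in this argument, which is why the lemma is flagged as a ``simple observation.'' The only point requiring a moment's care is confirming that $\mathscr{P}_0 = \mathbb{M}_{mn}^+$ genuinely qualifies as a $C^*$-convex subset of $\mathbb{M}_{mn} = \mathbb{M}_{mn}$, i.e.\ that the ambient algebra dimension $d = mn$ matches the convention in the definition of $C^*$-convexity; this is immediate. The substance of the lemma is therefore not the inclusion itself but the framing: it records that $\mathscr{P}_0$ is a $C^*$-convex ``ceiling'' for the whole theory, setting up the later collapse result $\operatorname{MCL}(\mathscr{P}_+) = \mathscr{P}_0$, where the reverse inclusion $\mathscr{P}_0 \subseteq \operatorname{MCL}(\mathscr{P}_+)$ is the genuinely hard direction.
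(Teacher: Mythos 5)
Your proposal is correct and follows exactly the same route as the paper's own proof: verify that $\mathscr{P}_0$ is norm-closed and $C^*$-convex (via closure of the positive cone under conjugations, where the normalization $\sum_i A_i^* A_i = I$ is indeed never used), then invoke minimality of $\operatorname{MCL}(\mathscr{P}_+)$ together with $\mathscr{P}_+ \subseteq \mathscr{P}_0$. No gaps; the argument matches the paper's in both structure and detail.
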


\begin{proof}
The set $\mathscr{P}_0$ is closed, and it is $C^\ast$-convex: if
$\mathbf{X}_1,\dots,\mathbf{X}_r \in \mathscr{P}_0$ and
$\mathbf{A}_1,\dots,\mathbf{A}_r \in \mathbb{M}_{mn}$ satisfy
$\sum_{i=1}^r \mathbf{A}_i^\ast \mathbf{A}_i = \mathbf{I}$, then each
$\mathbf{A}_i^\ast \mathbf{X}_i \mathbf{A}_i$ is positive semidefinite and
hence their sum belongs to $\mathscr{P}_0$. Clearly
$\mathscr{P}_+ \subseteq \mathscr{P}_0$, so $\mathscr{P}_0$ is a closed
$C^\ast$-convex set containing $\mathscr{P}_+$. By minimality of
$\operatorname{MCL}(\mathscr{P}_+)$, we obtain
$\operatorname{MCL}(\mathscr{P}_+) \subseteq \mathscr{P}_0$.
\end{proof}
The next theorem establishes the primary algebraic fragility of the separable cone, demonstrating that its $C^*$-convex hull strictly enlarges beyond the separable states, ultimately collapsing to the entire positive cone $\mathscr{P}_0$.
\begin{theorem}[Strict enlargement of the separable cone]
\label{thm:strict-MCL}
For all $m,n \ge 2$,
\[
\mathscr{P}_+ \subsetneq \operatorname{MCL}(\mathscr{P}_+).
\]
\end{theorem}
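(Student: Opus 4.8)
The plan is to prove the two inclusions separately, the easy one being automatic. By definition $\mathscr{P}_+ \subseteq \operatorname{MCL}(\mathscr{P}_+)$, and Lemma~\ref{lem:MCL-in-PSD} already sandwiches the hull as $\mathscr{P}_+ \subseteq \operatorname{MCL}(\mathscr{P}_+) \subseteq \mathscr{P}_0$. So the entire content is the strictness: I must exhibit a single matrix lying in $\operatorname{MCL}(\mathscr{P}_+)$ but outside $\mathscr{P}_+$. The strategy I would pursue is to show that a \emph{single-term} $C^*$-convex combination already escapes the separable cone, by conjugating a product projection with a global unitary that entangles the tensor factors.

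The key observation is that the $C^*$-convexity constraint $\sum_i \mathbf{A}_i^*\mathbf{A}_i = \mathbf{I}$ reduces, for a one-element family, to $\mathbf{A}_1^*\mathbf{A}_1 = \mathbf{I}$, which in the finite-dimensional square setting forces $\mathbf{A}_1$ to be unitary. Thus every unitary conjugate of an element of $\mathscr{P}_+$ lies in $\operatorname{MCL}(\mathscr{P}_+)$, since the latter is $C^*$-convex and contains $\mathscr{P}_+$. Now I would fix a product unit vector $v = e \otimes f \in \mathbb{C}^m \otimes \mathbb{C}^n$, so that $v v^*$ is a rank-one separable projection in $\mathscr{P}_+$, and fix an \emph{entangled} unit vector $\psi$ (for instance the Schmidt-rank-two vector $\tfrac{1}{\sqrt{2}}(e_1\otimes f_1 + e_2 \otimes f_2)$, which exists precisely because $m,n \ge 2$). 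Since $v$ and $\psi$ are both unit vectors in $\mathbb{C}^{mn}$, there is a unitary $\mathbf{U} \in \mathbb{M}_{mn}$ with $\mathbf{U}^* v = \psi$. A direct computation then gives
\[
\mathbf{U}^*(v v^*)\mathbf{U} = (\mathbf{U}^* v)(\mathbf{U}^* v)^* = \psi\psi^* \in \operatorname{MCL}(\mathscr{P}_+).
\]

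The final step is to certify that $\psi\psi^* \notin \mathscr{P}_+$. Here I would argue by ranges: if $\psi\psi^*$ admitted a separable decomposition $\sum_j \lambda_j (a_j\otimes b_j)(a_j\otimes b_j)^*$ with $\lambda_j > 0$, then, since the range of $\psi\psi^*$ is the line $\mathbb{C}\psi$, every product vector $a_j \otimes b_j$ would have to be proportional to $\psi$, making $\psi$ itself a product vector and contradicting its Schmidt rank two. Hence $\psi\psi^*$ separates the two sets and strictness follows. The whole argument is short; the only point demanding care is this separability obstruction — confirming that a rank-one positive matrix is separable if and only if its range vector is a product vector — together with the elementary fact that entangled pure states exist as soon as $m,n \ge 2$, which is exactly the hypothesis of the theorem.
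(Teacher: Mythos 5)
Your proposal is correct and takes essentially the same approach as the paper: a single-term $C^*$-convex combination with a unitary Kraus operator carries a rank-one product projection to an entangled rank-one projection $\psi\psi^*$, and the range argument certifies $\psi\psi^*\notin\mathscr{P}_+$. If anything, your orientation of the unitary ($\mathbf{U}^*v=\psi$, so that $\mathbf{U}^*(vv^*)\mathbf{U}=\psi\psi^*$ is immediate) is slightly more careful than the paper's, which sets $\mathbf{U}(u\otimes v)=w$ yet conjugates by $\mathbf{U}^*$ on the left, a harmless slip fixed by swapping $\mathbf{U}$ and $\mathbf{U}^*$.
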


\begin{proof}
The inclusion $\mathscr{P}_+ \subseteq \operatorname{MCL}(\mathscr{P}_+)$
follows directly from the definition of $\operatorname{MCL}(\mathscr{P}_+)$
as the smallest closed $C^\ast$-convex set containing $\mathscr{P}_+$.
It remains to show that the inclusion is strict.

Choose nonzero vectors $u \in \mathbb{C}^m$ and $v \in \mathbb{C}^n$ and
normalize them so that $\|u\| = \|v\| = 1$. Define
\[
\mathbf{X}_0 := (u u^\ast) \otimes (v v^\ast) \in \mathbb{M}_{mn}.
\]
Then $\mathbf{X}_0$ is positive semidefinite and of product form, hence
$\mathbf{X}_0 \in \mathscr{P}_+ \subseteq \operatorname{MCL}(\mathscr{P}_+)$.

Since $m,n \ge 2$, the tensor product space
$\mathbb{C}^m \otimes \mathbb{C}^n$ contains vectors that are not simple
tensors. Choose a unit vector $w \in \mathbb{C}^{mn}$ that is not of the form
$x \otimes y$ with $x \in \mathbb{C}^m$ and $y \in \mathbb{C}^n$. The unitary
group on $\mathbb{C}^{mn}$ acts transitively on the unit sphere, so there
exists a unitary matrix $\mathbf{U} \in \mathbb{M}_{mn}$ such that
\[
\mathbf{U}(u \otimes v) = w.
\]

Define $\mathbf{A}_1 := \mathbf{U}$. Then
$\mathbf{A}_1^\ast \mathbf{A}_1 = \mathbf{U}^\ast \mathbf{U} = \mathbf{I}$,
so the singleton family $\{\mathbf{A}_1\}$ is  a $C^\ast$-convex family.
Consider
\[
\mathbf{W} := \mathbf{A}_1^\ast \mathbf{X}_0 \mathbf{A}_1
            = \mathbf{U}^\ast \mathbf{X}_0 \mathbf{U}
            \in \operatorname{MCL}(\mathscr{P}_+),
\]
where the last inclusion holds by the $C^\ast$-convexity of
$\operatorname{MCL}(\mathscr{P}_+)$ and the fact that
$\mathbf{X}_0 \in \operatorname{MCL}(\mathscr{P}_+)$.

Using $\mathbf{X}_0 = (u u^\ast)\otimes (v v^\ast)$ and
$\mathbf{U}(u\otimes v)=w$, we obtain
\[
\mathbf{W} = \mathbf{U}^\ast \bigl((u u^\ast)\otimes (v v^\ast)\bigr)\mathbf{U}
           = w w^\ast.
\]
Thus $\mathbf{W}$ is a rank-one positive semidefinite matrix.

A rank-one matrix $\mathbf{Z} = z z^\ast \in \mathbb{M}_{mn}$ belongs to
$\mathscr{P}_+$ if and only if $z$ is a simple tensor: indeed, if
$z = x \otimes y$ for some $x \in \mathbb{C}^m$ and $y \in \mathbb{C}^n$ then
$\mathbf{Z} = (x x^\ast)\otimes (y y^\ast)$ is separable, and conversely if
$\mathbf{Z} \in \mathscr{P}_+$ then its range is one-dimensional and spanned
by a product vector. By construction $w$ is not a simple tensor, hence
$\mathbf{W} = w w^\ast \notin \mathscr{P}_+$. We have therefore exhibited
an element $\mathbf{W} \in \operatorname{MCL}(\mathscr{P}_+)$ that is not
in $\mathscr{P}_+$, which proves that the inclusion is strict.
\end{proof}

The following result was  showed in the proof of Theorem~\ref{thm:strict-MCL}.
\begin{corollary}
\label{thm:rank-one-in-MCL}
Let $m,n \ge 2$, and let $\mathbf{P}_w := w w^\ast \in \mathbb{M}_{mn}$ denote
the rank-one projector associated to a unit vector $w \in \mathbb{C}^{mn}$.
Then
\[
\mathbf{P}_w \in \operatorname{MCL}(\mathscr{P}_+)
\quad\text{for every unit } w \in \mathbb{C}^{mn}.
\]
\end{corollary}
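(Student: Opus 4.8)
The plan is to observe that the corollary is an immediate consequence of the construction already carried out in the proof of Theorem~\ref{thm:strict-MCL}, so almost no new work is required. The key point is that in that proof the specific vector $w$ was assumed to be a \emph{non}-simple tensor only in order to guarantee that $\mathbf{P}_w = w w^\ast$ fails to lie in $\mathscr{P}_+$; the actual membership statement $\mathbf{P}_w \in \operatorname{MCL}(\mathscr{P}_+)$ was established by a mechanism that does not depend on $w$ being entangled. I would therefore re-run that mechanism for an arbitrary unit vector.

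Concretely, first I would fix an arbitrary unit vector $w \in \mathbb{C}^{mn}$ together with the fixed product reference vector $u \otimes v$, where $u \in \mathbb{C}^m$ and $v \in \mathbb{C}^n$ are unit vectors, so that $\mathbf{X}_0 := (uu^\ast) \otimes (vv^\ast) \in \mathscr{P}_+ \subseteq \operatorname{MCL}(\mathscr{P}_+)$. Next, since the unitary group on $\mathbb{C}^{mn}$ acts transitively on the unit sphere, I would choose a unitary $\mathbf{U} \in \mathbb{M}_{mn}$ with $\mathbf{U}(u \otimes v) = w$. Setting $\mathbf{A}_1 := \mathbf{U}$ gives a singleton $C^\ast$-convex family because $\mathbf{A}_1^\ast \mathbf{A}_1 = \mathbf{I}$, and hence $\mathbf{U}^\ast \mathbf{X}_0 \mathbf{U} \in \operatorname{MCL}(\mathscr{P}_+)$ by $C^\ast$-convexity. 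The final computation, identical to the one in the theorem, is
\[
\mathbf{U}^\ast \mathbf{X}_0 \mathbf{U}
  = \mathbf{U}^\ast \bigl((uu^\ast) \otimes (vv^\ast)\bigr) \mathbf{U}
  = \mathbf{U}^\ast (u \otimes v)(u \otimes v)^\ast \mathbf{U}
  = w w^\ast = \mathbf{P}_w,
\]
which yields $\mathbf{P}_w \in \operatorname{MCL}(\mathscr{P}_+)$.

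There is essentially no obstacle here, since every ingredient — transitivity of the unitary action on the sphere, the singleton $C^\ast$-convex family, and the conjugation identity $\mathbf{U}^\ast (\xi \xi^\ast) \mathbf{U} = (\mathbf{U}^\ast \xi)(\mathbf{U}^\ast \xi)^\ast$ — was already used in the theorem. The only conceptual care needed is to note explicitly that \emph{dropping} the non-simple-tensor hypothesis on $w$ costs nothing: that hypothesis was used solely in the theorem's concluding step to certify $\mathbf{P}_w \notin \mathscr{P}_+$, and it plays no role in certifying membership in $\operatorname{MCL}(\mathscr{P}_+)$. I would close by remarking that the result is most naturally phrased as saying $\operatorname{MCL}(\mathscr{P}_+)$ contains every rank-one projector, which is precisely the engine driving the subsequent collapse statement $\operatorname{MCL}(\mathscr{P}_+) = \mathscr{P}_0$.
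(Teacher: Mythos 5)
Your proposal is correct and is essentially identical to the paper's own treatment: the paper proves this corollary simply by pointing back to the construction in the proof of Theorem~\ref{thm:strict-MCL}, and you replay exactly that construction (product state $\mathbf{X}_0$, unitary transitivity, singleton $C^\ast$-convex family), correctly observing that the non-simple-tensor hypothesis on $w$ was only ever needed for the non-membership conclusion $\mathbf{P}_w \notin \mathscr{P}_+$. One cosmetic point (shared with the paper's own write-up): with the convention $\mathbf{U}(u\otimes v)=w$, the conjugation $\mathbf{U}^\ast \mathbf{X}_0 \mathbf{U}$ actually equals $\bigl(\mathbf{U}^\ast(u\otimes v)\bigr)\bigl(\mathbf{U}^\ast(u\otimes v)\bigr)^\ast$, so to land on $ww^\ast$ one should either take $\mathbf{A}_1:=\mathbf{U}^\ast$ or instead choose the unitary so that $\mathbf{U}w=u\otimes v$ --- a trivial relabeling, since $\mathbf{U}^\ast$ is equally admissible as a singleton Kraus family.
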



\begin{corollary}
\label{thm:cone-MCL-equals-P0}
Let $m,n \ge 2$. Then
\[
\operatorname{cone}\bigl(\operatorname{MCL}(\mathscr{P}_+)\bigr)
= \mathscr{P}_0.
\]
\end{corollary}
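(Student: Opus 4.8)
The plan is to prove the two inclusions separately, and the work is almost entirely front-loaded into Corollary~\ref{thm:rank-one-in-MCL}, which already does the heavy lifting.

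For the inclusion $\operatorname{cone}(\operatorname{MCL}(\mathscr{P}_+)) \subseteq \mathscr{P}_0$, I would invoke Lemma~\ref{lem:MCL-in-PSD} to get $\operatorname{MCL}(\mathscr{P}_+) \subseteq \mathscr{P}_0$, and then observe that $\mathscr{P}_0$ is itself a convex cone: it is closed under both addition and multiplication by nonnegative scalars. Hence any finite nonnegative combination $\sum_j \lambda_j \mathbf{X}_j$ with $\lambda_j \ge 0$ and $\mathbf{X}_j \in \operatorname{MCL}(\mathscr{P}_+) \subseteq \mathscr{P}_0$ again lies in $\mathscr{P}_0$. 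Taking the conic hull preserves this, giving $\operatorname{cone}(\operatorname{MCL}(\mathscr{P}_+)) \subseteq \operatorname{cone}(\mathscr{P}_0) = \mathscr{P}_0$.

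For the reverse inclusion $\mathscr{P}_0 \subseteq \operatorname{cone}(\operatorname{MCL}(\mathscr{P}_+))$, the idea is to reduce an arbitrary positive semidefinite matrix to rank-one projectors via the spectral theorem. Given $\mathbf{X} \in \mathscr{P}_0$, write its spectral decomposition
\[
\mathbf{X} = \sum_{j=1}^{r} \lambda_j\, w_j w_j^\ast,
\]
where $r \le mn$, the $\lambda_j \ge 0$ are the (nonnegative) eigenvalues, and the $w_j \in \mathbb{C}^{mn}$ are orthonormal, hence unit, eigenvectors. By Corollary~\ref{thm:rank-one-in-MCL}, each rank-one projector $\mathbf{P}_{w_j} = w_j w_j^\ast$ lies in $\operatorname{MCL}(\mathscr{P}_+)$. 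Therefore $\mathbf{X}$ is a finite nonnegative combination of elements of $\operatorname{MCL}(\mathscr{P}_+)$, so $\mathbf{X} \in \operatorname{cone}(\operatorname{MCL}(\mathscr{P}_+))$ directly from the definition of the conic hull.

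Combining the two inclusions yields the claimed equality. I expect no genuine obstacle here: the only substantive input is the fact that \emph{every} rank-one projector belongs to $\operatorname{MCL}(\mathscr{P}_+)$, which was established in Corollary~\ref{thm:rank-one-in-MCL} (itself extracted from the unitary-orbit argument in Theorem~\ref{thm:strict-MCL}). The one point worth noting is that the spectral decomposition is a \emph{finite} sum of at most $mn$ terms, so the unclosed conic hull $\operatorname{cone}(\cdot)$ already suffices and no passage to the closure $\operatorname{ccone}(\cdot)$ is needed.
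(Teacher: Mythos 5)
Your proposal is correct and follows essentially the same route as the paper's own proof: Lemma~\ref{lem:MCL-in-PSD} plus the cone property of $\mathscr{P}_0$ for one inclusion, and the spectral decomposition combined with Corollary~\ref{thm:rank-one-in-MCL} for the other. Your closing observation that the finite spectral sum makes the unclosed conic hull suffice is accurate and consistent with what the paper implicitly does.
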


\begin{proof}
We first show
$\operatorname{cone}(\operatorname{MCL}(\mathscr{P}_+)) \subseteq \mathscr{P}_0$.
By Lemma~\ref{lem:MCL-in-PSD} we have
$\operatorname{MCL}(\mathscr{P}_+) \subseteq \mathscr{P}_0$, and
$\mathscr{P}_0$ is a cone. Hence any finite nonnegative linear combination
of elements of $\operatorname{MCL}(\mathscr{P}_+)$ still belongs to
$\mathscr{P}_0$, which proves the inclusion
$\operatorname{cone}(\operatorname{MCL}(\mathscr{P}_+)) \subseteq \mathscr{P}_0$.

For the reverse inclusion, let $\mathbf{Y} \in \mathscr{P}_0$ be arbitrary.
By the spectral theorem there exist nonnegative scalars
$\lambda_1,\dots,\lambda_r$ and an orthonormal family of vectors
$w_1,\dots,w_r \in \mathbb{C}^{mn}$ such that
\[
\mathbf{Y} = \sum_{j=1}^r \lambda_j \mathbf{P}_{w_j},
\qquad
\mathbf{P}_{w_j} := w_j w_j^\ast.
\]
By Theorem~\ref{thm:rank-one-in-MCL}, each $\mathbf{P}_{w_j}$ belongs to
$\operatorname{MCL}(\mathscr{P}_+)$, and therefore
$\lambda_j \mathbf{P}_{w_j} \in
\operatorname{cone}(\operatorname{MCL}(\mathscr{P}_+))$ for all $j$.
Since the conic hull is closed under finite sums, we have
\[
\mathbf{Y}
= \sum_{j=1}^r \lambda_j \mathbf{P}_{w_j}
\in \operatorname{cone}(\operatorname{MCL}(\mathscr{P}_+)).
\]
Because $\mathbf{Y} \in \mathscr{P}_0$ was arbitrary, this proves
$\mathscr{P}_0 \subseteq \operatorname{cone}(\operatorname{MCL}(\mathscr{P}_+))$,
and hence the desired equality.
\end{proof}
\begin{remark}
The preceding results show that the closed $C^\ast$-convex hull
$\operatorname{MCL}(\mathscr{P}_+)$ of the separable cone is a large subset
of the positive cone: it contains all separable positive semidefinite matrices,
and in fact all rank-one projectors on $\mathbb{C}^m \otimes \mathbb{C}^n$.
Consequently, its conic hull coincides with $\mathscr{P}_0$.

On the other hand, $\operatorname{MCL}(\mathscr{P}_+)$ is defined as a closed
$C^\ast$-convex \emph{set} rather than a cone, so its geometry is quite
different from that of the usual cones considered in entanglement theory.
\end{remark}


\begin{theorem}
\label{thm:Pminus-not-Cstar-convex}
Assume $m,n \ge 2$.   Then the cone
$\mathscr{P}_-$ is not $C^\ast$-convex.
\end{theorem}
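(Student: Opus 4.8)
The plan is to mirror the single-unitary argument from the proof of Theorem~\ref{thm:strict-MCL}: I will exhibit one operator $\mathbf{W} \in \mathscr{P}_-$ and one unitary $\mathbf{U} \in \mathbb{M}_{mn}$ such that the singleton family $\{\mathbf{U}\}$ satisfies $\mathbf{U}^\ast \mathbf{U} = \mathbf{I}$ (hence is a legitimate $C^\ast$-convex family) while $\mathbf{U}^\ast \mathbf{W} \mathbf{U} \notin \mathscr{P}_-$. Producing such a pair immediately shows that $\mathscr{P}_-$ is not closed under $C^\ast$-convex combinations.

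The conceptual heart of the argument is that conjugation by a unitary preserves positive semidefiniteness but not block-positivity, since a generic $\mathbf{U}$ does not map product vectors $z \otimes w$ to product vectors. Consequently the argument cannot begin from a positive semidefinite witness: if $\mathbf{W} \ge 0$ then $\mathbf{U}^\ast \mathbf{W} \mathbf{U} \ge 0$ remains block-positive. I therefore need a genuine block-positive operator that fails to be positive semidefinite, and such operators exist precisely because $m,n \ge 2$. Indeed, by duality $\mathscr{P}_0 = \mathscr{P}_0^\ast \subseteq \mathscr{P}_+^\ast = \mathscr{P}_-$, and since closed convex cones satisfy $\mathcal{K}^{\ast\ast} = \mathcal{K}$, the strict inclusion $\mathscr{P}_+ \subsetneq \mathscr{P}_0$ (valid for $m,n\ge 2$) yields the strict inclusion $\mathscr{P}_0 \subsetneq \mathscr{P}_-$. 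A concrete instance when $m = n$ is the swap operator $\mathbf{W} = \sum_{i,j} (e_i e_j^\ast)\otimes(e_j e_i^\ast)$, for which $(z\otimes w)^\ast \mathbf{W}(z\otimes w) = |\langle z, w\rangle|^2 \ge 0$, so $\mathbf{W}\in\mathscr{P}_-$, while $\mathbf{W}$ has eigenvalue $-1$ on antisymmetric vectors and hence $\mathbf{W}\notin\mathscr{P}_0$.

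With such a $\mathbf{W}$ fixed, I choose a unit vector $\xi \in \mathbb{C}^{mn}$ and a scalar $\mu < 0$ with $\mathbf{W}\xi = \mu\xi$, so that $\xi^\ast \mathbf{W}\xi = \mu < 0$. (Block-positivity forces any such $\xi$ to be entangled, a consistency check that is not otherwise needed.) Taking any unit product vector $e \otimes f$ and invoking transitivity of the unitary group on the unit sphere of $\mathbb{C}^{mn}$, exactly as in Theorem~\ref{thm:strict-MCL}, I select a unitary $\mathbf{U}$ with $\mathbf{U}(e \otimes f) = \xi$. Then
\[
(e \otimes f)^\ast \bigl(\mathbf{U}^\ast \mathbf{W} \mathbf{U}\bigr)(e \otimes f)
= \xi^\ast \mathbf{W} \xi = \mu < 0 ,
\]
so by the block-positivity characterization of $\mathscr{P}_-$, the operator $\mathbf{U}^\ast \mathbf{W}\mathbf{U}$ is not block-positive, i.e.\ $\mathbf{U}^\ast \mathbf{W}\mathbf{U} \notin \mathscr{P}_-$. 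Since $\{\mathbf{U}\}$ is a valid $C^\ast$-convex family and $\mathbf{W}\in\mathscr{P}_-$, this shows $\mathscr{P}_-$ is not $C^\ast$-convex.

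The main obstacle I anticipate is the clean, uniform construction of a non-positive block-positive witness $\mathbf{W}$ valid for all $m,n \ge 2$ rather than only the balanced case: the swap is directly available only when $m = n$. For $m \ne n$ I would either embed a balanced $\mathbb{C}^2\otimes\mathbb{C}^2$ block and pad with a positive semidefinite complement, or simply rely on the strict duality inclusion $\mathscr{P}_0 \subsetneq \mathscr{P}_-$ established above to guarantee the existence of a block-positive operator possessing a negative eigenvalue. The remaining steps — the unitary transitivity argument and the final sign computation — are routine and parallel the earlier theorem.
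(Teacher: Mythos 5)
Your proof is correct, and its engine is the same as the paper's: a single unitary Kraus family $\{\mathbf{U}\}$, chosen by transitivity of the unitary group so that a product vector is carried onto a direction where the witness $\mathbf{W}$ has negative expectation, whence $\mathbf{U}^\ast\mathbf{W}\mathbf{U}$ violates block-positivity and leaves $\mathscr{P}_-$. The difference lies in how the witness and its negative direction are manufactured. The paper fixes an entangled $\mathbf{Y}\in\mathscr{P}_0\setminus\mathscr{P}_+$, invokes the separating hyperplane theorem to obtain $\mathbf{W}\in\mathscr{P}_-$ with $\operatorname{Tr}(\mathbf{W}\mathbf{Y})<0$, and then argues that some vector $z_j$ in the spectral decomposition of $\mathbf{Y}$ must satisfy $z_j^\ast\mathbf{W}z_j<0$. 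You instead derive the strict inclusion $\mathscr{P}_0\subsetneq\mathscr{P}_-$ from the bipolar theorem (equivalently, produce a $\mathbf{W}$ that is block-positive but not positive semidefinite) and take the negative direction to be an eigenvector of $\mathbf{W}$ with negative eigenvalue; this is slightly more economical, since the spectral theorem is applied to $\mathbf{W}$ itself rather than to a detected state, and your remark that the witness must necessarily be non-PSD for the unitary trick to work is exactly the right structural point. Your swap-operator example makes the witness fully explicit when $m=n$, and your worry about unbalanced dimensions dissolves more completely than you suggest: the truncated swap $\sum_{i,j=1}^{2}(e_ie_j^\ast)\otimes(e_je_i^\ast)\in\mathbb{M}_m\otimes\mathbb{M}_n$ satisfies $(z\otimes w)^\ast\mathbf{W}(z\otimes w)=\bigl|\sum_{i\le 2}\overline{z}_i w_i\bigr|^2\ge 0$ and has eigenvalue $-1$ on $e_1\otimes e_2-e_2\otimes e_1$, so the explicit construction works for all $m,n\ge 2$ with no padding needed; alternatively, your duality fallback is already complete as written.
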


\begin{proof}
Since $\mathscr{P}_+$ is a proper subset of $\mathscr{P}_0$ for $m,n \ge 2$,
we can choose $\mathbf{Y} \in \mathscr{P}_0 \setminus \mathscr{P}_+$. By the
separating hyperplane theorem (We view the Hermitian matrices as a real vector space with inner product $\langle \mathbf{X},\mathbf{Y}\rangle=\mathrm{Tr}(\mathbf{XY})$) applied to the closed convex cone
$\mathscr{P}_+$ and the point $\mathbf{Y} \notin \mathscr{P}_+$, there exists
a Hermitian matrix $\mathbf{W} \in \mathbb{M}_{mn}$ such that
\[
\operatorname{Tr}(\mathbf{W}\,\mathbf{X}) \ge 0
\quad\text{for all } \mathbf{X} \in \mathscr{P}_+,
\qquad
\operatorname{Tr}(\mathbf{W}\,\mathbf{Y}) < 0.
\]
By definition this means $\mathbf{W} \in \mathscr{P}_-$, and $\mathbf{W}$ is
an entanglement witness detecting the entangled positive semidefinite matrix
$\mathbf{Y}$.

Since $\mathbf{Y} \in \mathscr{P}_0$ is positive semidefinite, it admits a
spectral decomposition
\[
\mathbf{Y} = \sum_{i=1}^r \lambda_i\, z_i z_i^\ast,
\qquad
\lambda_i > 0,
\quad
r = \operatorname{rank}(\mathbf{Y}).
\]
Because $\operatorname{Tr}(\mathbf{W}\mathbf{Y}) < 0$, at least one term in
the above sum must satisfy
\[
\langle z_j,\, \mathbf{W} z_j \rangle < 0.
\]
Fix such an index $j$ and set $\mathbf{z} := z_j$.
 We can choose a unit vector $u \otimes v \in \mathbb{C}^m \otimes \mathbb{C}^n$
and a unitary matrix $\mathbf{U} \in \mathbb{M}_{mn}$ such that
\[
\mathbf{U}(u \otimes v) = \frac{\mathbf{z}}{\|\mathbf{z}\|}.
\]

Define $\mathbf{W}' := \mathbf{U}^\ast \mathbf{W} \mathbf{U}$. We will show
that $\mathbf{W}'$ is not block-positive, and hence $\mathbf{W}' \notin
\mathscr{P}_-$. Consider the product vector $u \otimes v$:
\[
(u \otimes v)^\ast \mathbf{W}' (u \otimes v)
= (u \otimes v)^\ast \mathbf{U}^\ast \mathbf{W} \mathbf{U} (u \otimes v)
= \bigl( \mathbf{U}(u \otimes v) \bigr)^\ast
    \mathbf{W}\,
   \bigl( \mathbf{U}(u \otimes v) \bigr).
\]
By construction $\mathbf{U}(u \otimes v) = \mathbf{z} / \|\mathbf{z}\|$, so
\[
(u \otimes v)^\ast \mathbf{W}' (u \otimes v)
= \frac{1}{\|\mathbf{z}\|^2}\,\mathbf{z}^\ast \mathbf{W} \mathbf{z}
< 0.
\]
Thus $\mathbf{W}'$ takes a negative expectation value on the product vector
$u \otimes v$, so $\mathbf{W}'$ is not block-positive and therefore
$\mathbf{W}' \notin \mathscr{P}_-$.

Now observe that $\mathbf{W}'$ is obtained from $\mathbf{W}$ by a one-term
$C^\ast$-combination: if we define $\mathbf{A}_1 := \mathbf{U}$, then
$\mathbf{A}_1^\ast \mathbf{A}_1 = \mathbf{I}$ and
\[
\mathbf{W}' = \mathbf{A}_1^\ast \mathbf{W} \mathbf{A}_1.
\]
We have exhibited $\mathbf{W} \in \mathscr{P}_-$ and a matrix
$\mathbf{A}_1 \in \mathbb{M}_{mn}$ with $\mathbf{A}_1^\ast \mathbf{A}_1
= \mathbf{I}$ such that $\mathbf{A}_1^\ast \mathbf{W} \mathbf{A}_1 \notin
\mathscr{P}_-$. This shows that $\mathscr{P}_-$ is not closed under
$C^\ast$-convex combinations  and hence
$\mathscr{P}_-$ is not $C^\ast$-convex.
\end{proof}

Although $\mathscr{P}_{-}$ is  not globally $C^*$-convex, as shown by the preceding theorem, it enjoys a weaker type of $C^*$-convexity.

Here,  we introduce a variant of matrix convexity that respects the
bipartite tensor-product structure $\mathbb{M}_{mn} = \mathbb{M}_m \otimes
\mathbb{M}_n$. Unlike the global notion of $C^\ast$-convexity considered in the
preceding sections, the definition below restricts Kraus operators to local
tensors $B_i \otimes C_i$.

\begin{definition}
We say that a family of operators
\[
\mathbf{A}_1,\dots,\mathbf{A}_r \in \mathbb{M}_{mn}
\]
is   a \emph{local $C^*$-convex family} if each operator has the form
\[
\mathbf{A}_i = B_i \otimes C_i
\quad\text{with}\quad
B_i \in \mathbb{M}_m,\ C_i \in \mathbb{M}_n,
\]
and the normalization condition
\begin{equation}
\label{eq:local-normalization}
\sum_{i=1}^r \mathbf{A}_i^\ast \mathbf{A}_i
= \sum_{i=1}^r (B_i^\ast B_i)\otimes (C_i^\ast C_i)
= I_m \otimes I_n
\end{equation}
is satisfied.
\end{definition}

\begin{definition}[Local $C^\ast$-convex set]
We say that a subset $\mathcal{K} \subseteq \mathbb{M}_{mn}$ is
\emph{locally $C^\ast$-convex} if for every finite family
$\mathbf{X}_1,\dots,\mathbf{X}_r \in \mathcal{K}$ and every local $C^*$-convex family
$\{\mathbf{A}_i\}_{i=1}^r$ satisfying \eqref{eq:local-normalization}, one has
\[
\sum_{i=1}^r \mathbf{A}_i^\ast \mathbf{X}_i \mathbf{A}_i \in \mathcal{K}.
\]
\end{definition}

\begin{definition}[Local matrix-convex hull]
\label{def:MCL-local}
Let $\mathcal{K} \subseteq \mathbb{M}_{mn}$ .
We define the  \emph{local $C^*$-convex hull} of $\mathcal{K}$, denoted by   $\operatorname{MCL}_{\mathrm{loc}}(\mathcal{K})$ to be  the smallest
closed locally $C^\ast$-convex cone containing $\mathcal{K}$.
\end{definition}

We record a simple but important observation: restricting Kraus operators to
local tensors $\mathbf{B}_i \otimes \mathbf{C}_i$ produces a significantly
weaker notion of $C^\ast$-convexity than allowing arbitrary operators in
$\mathbb{M}_{mn}$. The following example illustrates that the two notions do
not coincide.

\begin{proposition}
\label{prop:block-positive-local-not-global}
Let $\mathscr{P}_- $  denote the cone of block-positive
operators in $\mathbb{M}_{mn}$. Then
\begin{enumerate}
    \item[(i)] $\mathscr{P}_-$ is a locally $C^\ast$-convex cone.
    \item[(ii)] $\mathscr{P}_-$ is \emph{not} globally $C^\ast$-convex.
\end{enumerate}
\end{proposition}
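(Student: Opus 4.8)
The plan is to handle the two parts separately, with essentially all the work in part~(i); part~(ii) is already in hand. For part~(ii) I would simply invoke Theorem~\ref{thm:Pminus-not-Cstar-convex}, which exhibits $\mathbf{W} \in \mathscr{P}_-$ together with a unitary $\mathbf{U}$ satisfying $\mathbf{U}^\ast \mathbf{U} = \mathbf{I}$ such that $\mathbf{U}^\ast \mathbf{W} \mathbf{U} \notin \mathscr{P}_-$. This single one-term $C^\ast$-combination already witnesses the failure of global $C^\ast$-convexity, so no new argument is required.

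For part~(i) I would first record that $\mathscr{P}_-$ is a cone: being a dual cone it is automatically a closed convex cone, and block-positivity is manifestly preserved under nonnegative scaling and addition. The substance is therefore the verification of local $C^\ast$-convexity. Fix block-positive operators $\mathbf{X}_1,\dots,\mathbf{X}_r \in \mathscr{P}_-$ and a local $C^\ast$-convex family $\mathbf{A}_i = B_i \otimes C_i$, and set $\mathbf{Y} := \sum_{i=1}^r \mathbf{A}_i^\ast \mathbf{X}_i \mathbf{A}_i$. The goal is to show that $\mathbf{Y}$ is block-positive, i.e.\ that $(z \otimes w)^\ast \mathbf{Y} (z \otimes w) \ge 0$ for every $z \in \mathbb{C}^m$ and $w \in \mathbb{C}^n$.

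The key observation is that a local Kraus operator maps product vectors to product vectors, $(B_i \otimes C_i)(z \otimes w) = (B_i z) \otimes (C_i w)$, so that after using $(B_i \otimes C_i)^\ast = B_i^\ast \otimes C_i^\ast$ and expanding the quadratic form term by term one obtains
\[
(z \otimes w)^\ast \mathbf{Y} (z \otimes w)
= \sum_{i=1}^r \bigl((B_i z) \otimes (C_i w)\bigr)^\ast \mathbf{X}_i \bigl((B_i z) \otimes (C_i w)\bigr).
\]
Since $B_i z \in \mathbb{C}^m$ and $C_i w \in \mathbb{C}^n$, each summand is the value of the block-positive form associated with $\mathbf{X}_i$ on a genuine product vector, hence nonnegative; a finite sum of nonnegative reals is nonnegative, so $\mathbf{Y}$ is block-positive and lies in $\mathscr{P}_-$.

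I expect no real obstacle here, and it is worth flagging why: the argument is entirely term-by-term and never uses the normalization condition \eqref{eq:local-normalization}. This is precisely the structural reason behind the dichotomy asserted by the proposition. Block-positivity is tested against the set of product vectors, and that set is preserved exactly by the local operators $B_i \otimes C_i$; by contrast a general $\mathbf{A}_i \in \mathbb{M}_{mn}$ sends product vectors to entangled vectors, which is exactly the mechanism exploited in Theorem~\ref{thm:Pminus-not-Cstar-convex} to destroy block-positivity. The only point requiring mild care is the bookkeeping of adjoints and of the mixed-product identity for $(B_i^\ast \otimes C_i^\ast)\,\mathbf{X}_i\,(B_i \otimes C_i)$ acting on $z \otimes w$, but this is routine.
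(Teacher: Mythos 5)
Your proposal is correct and follows essentially the same route as the paper: part~(i) is proved by the identical term-by-term evaluation of the block-positive form on $(B_i z)\otimes(C_i w)$, using that local Kraus operators preserve product vectors, and part~(ii) is delegated to Theorem~\ref{thm:Pminus-not-Cstar-convex} exactly as the paper does. Your added remark that the normalization condition \eqref{eq:local-normalization} is never used is accurate (the paper's proof does not use it either) and is a nice observation, but it does not change the substance of the argument.
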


\begin{proof}
\emph{(i) Local $C^\ast$-convexity.}
Suppose $\mathbf{W}_1,\dots,\mathbf{W}_r \in \mathscr{P}_-$ and
$\{\mathbf{A}_i\}_{i=1}^r$ is a local Kraus family,
$\mathbf{A}_i = B_i \otimes C_i$, satisfying
$\sum_i \mathbf{A}_i^\ast \mathbf{A}_i = I_m \otimes I_n$.
Let
\[
\mathbf{W} := \sum_{i=1}^r \mathbf{A}_i^\ast \mathbf{W}_i \mathbf{A}_i.
\]
Let $z \in \mathbb{C}^m$ and $w \in \mathbb{C}^n$. Then
\[
(z^\ast \otimes w^\ast)\,\mathbf{W}\,(z\otimes w)
= \sum_{i=1}^r
\bigl( (B_i z)^\ast \otimes (C_i w)^\ast \bigr)
\,\mathbf{W}_i\,
\bigl( (B_i z) \otimes (C_i w) \bigr)
\ge 0,
\]
because each $\mathbf{W}_i$ is block-positive and each
$(B_i z)\otimes (C_i w)$ is a product vector.
Thus $\mathbf{W} \in \mathscr{P}_-$, proving that $\mathscr{P}_-$ is locally
$C^\ast$-convex.

\smallskip
\emph{(ii)}  Proved in Theorem~\ref{thm:Pminus-not-Cstar-convex}.
\end{proof}

Before proving the dramatic global collapse, we establish a necessary stability result. The following proposition shows that $\mathscr{P}_+$ is preserved under $C^*$-convex combinations when the coefficients are restricted to local operators, confirming its expected stability under local quantum operations.
\begin{proposition}
\label{prop:Pplus-local-Cstar-convex}
The separable cone $\mathscr{P}_+$ is a closed locally $C^\ast$-convex cone.
\end{proposition}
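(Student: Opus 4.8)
The plan is to verify separately the three defining properties of a closed locally $C^\ast$-convex cone: that $\mathscr{P}_+$ is a cone, that it is norm-closed, and that it is stable under local $C^\ast$-convex combinations.

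The cone property is immediate from the definition of $\mathscr{P}_+$ as the set of finite nonnegative combinations of product projectors $(a \otimes b)(a \otimes b)^\ast$; scaling by a nonnegative factor and adding two such combinations again yields a combination of the same type. For closedness I would use a compactness argument. First note that the set of pure product states
\[
\mathcal{S} := \bigl\{ (a \otimes b)(a \otimes b)^\ast : \|a\| = \|b\| = 1 \bigr\}
\]
is the continuous image of the compact product of unit spheres, hence compact. By Carath\'eodory's theorem its convex hull, taken in the finite-dimensional real space of Hermitian matrices, is again compact, and it coincides with the set of trace-one separable matrices. Since this compact convex set does not contain the origin, its conic hull is closed (a convergent sequence $t_n \mathbf{X}_n \to \mathbf{Y}$ with $\mathbf{X}_n \in \mathcal{S}$ forces $t_n$ to stay bounded, after which one passes to a convergent subsequence), and this conic hull is precisely $\mathscr{P}_+$.

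The substantive step is local $C^\ast$-convexity. Given $\mathbf{X}_1, \dots, \mathbf{X}_r \in \mathscr{P}_+$ and a local family $\mathbf{A}_i = B_i \otimes C_i$ satisfying \eqref{eq:local-normalization}, I would exploit the fact that a local conjugation sends product vectors to product vectors:
\[
(B_i \otimes C_i)^\ast\, (a \otimes b)(a \otimes b)^\ast\, (B_i \otimes C_i)
= \bigl(B_i^\ast a \otimes C_i^\ast b\bigr)\bigl(B_i^\ast a \otimes C_i^\ast b\bigr)^\ast .
\]
Expanding each $\mathbf{X}_i$ as a nonnegative combination of product projectors and applying this identity term by term shows that every $\mathbf{A}_i^\ast \mathbf{X}_i \mathbf{A}_i$ is itself separable; the cone property then keeps the sum $\sum_i \mathbf{A}_i^\ast \mathbf{X}_i \mathbf{A}_i$ inside $\mathscr{P}_+$.

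I expect the main (and only genuinely technical) obstacle to be the closedness argument, which relies on the compactness of $\mathcal{S}$ together with the Carath\'eodory bound on the number of terms. By contrast, the local $C^\ast$-convexity reduces to the elementary observation displayed above and does not even invoke the normalization condition \eqref{eq:local-normalization}: separability is preserved by local conjugation whether or not the Kraus family is normalized, so the normalization merely identifies which combinations qualify as \emph{$C^\ast$-convex} rather than playing any role in the membership $\sum_i \mathbf{A}_i^\ast \mathbf{X}_i \mathbf{A}_i \in \mathscr{P}_+$.
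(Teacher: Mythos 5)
Your proof is correct, and on the substantive step---local $C^\ast$-convexity---it is essentially the paper's argument: the paper expands each $\mathbf{X}_i$ as $\sum_j P_{ij}\otimes Q_{ij}$ with $P_{ij},Q_{ij}$ positive semidefinite and conjugates term by term via $(B_i^\ast\otimes C_i^\ast)(P_{ij}\otimes Q_{ij})(B_i\otimes C_i)=(B_i^\ast P_{ij}B_i)\otimes(C_i^\ast Q_{ij}C_i)$, whereas you work with rank-one product projectors and the corresponding identity on vectors; the two computations are interchangeable. Where you genuinely add something is closedness: the paper dismisses it as ``clear from the definition,'' while you give the standard compactness argument (the pure product states $\mathcal{S}$ form a compact set, their convex hull is compact by Carath\'eodory, it misses the origin, hence the cone over it is closed). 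Since the conic hull of a closed set need not be closed in general, this is worth spelling out, and your argument is the right one---with one small slip: in your parenthetical the sequence $\mathbf{X}_n$ should range over the convex hull of $\mathcal{S}$ (the trace-one separable matrices), not over $\mathcal{S}$ itself, since a general element of $\mathscr{P}_+$ is a nonnegative multiple of a trace-one separable matrix rather than of a single pure product state; the trace functional then shows $t_n$ converges, and compactness of the base finishes the proof. Your closing observation that the normalization \eqref{eq:local-normalization} plays no role in the membership conclusion is also correct, and indeed the paper's proof never uses it either.
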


\begin{proof}
It is clear from the definition that $\mathscr{P}_+$ is a convex cone and that
it is closed in the norm topology of $\mathbb{M}_{mn}$.

We verify local $C^\ast$-convexity. Let
$\mathbf{X}_1,\dots,\mathbf{X}_r \in \mathscr{P}_+$ and let
$\{\mathbf{A}_i\}_{i=1}^r$ be a local Kraus family,
$\mathbf{A}_i = B_i \otimes C_i$, satisfying
\[
\sum_{i=1}^r \mathbf{A}_i^\ast \mathbf{A}_i
= \sum_{i=1}^r (B_i^\ast B_i)\otimes
                 (C_i^\ast C_i)
= I_m \otimes I_n.
\]

For each $i$ we may write
\[
\mathbf{X}_i = \sum_{j=1}^{r_i} P_{ij} \otimes Q_{ij},
\quad
P_{ij} \ge 0,\ Q_{ij} \ge 0.
\]
Consider
\[
\mathbf{Y}
:= \sum_{i=1}^r \mathbf{A}_i^\ast \mathbf{X}_i \mathbf{A}_i
 = \sum_{i=1}^r (B_i^\ast \otimes C_i^\ast)
      \left( \sum_{j=1}^{r_i} P_{ij} \otimes Q_{ij} \right)
      (B_i \otimes C_i).
\]
Using $(B_i^\ast \otimes C_i^\ast)
(P_{ij} \otimes Q_{ij})
(B_i \otimes C_i)
= (B_i^\ast P_{ij} B_i)
  \otimes
  (C_i^\ast Q_{ij} C_i)$, we obtain
\[
\mathbf{Y}
= \sum_{i=1}^r \sum_{j=1}^{r_i}
   \bigl(B_i^\ast P_{ij} B_i\bigr)
   \otimes
   \bigl(C_i^\ast Q_{ij} C_i\bigr).
\]
Each matrix $B_i^\ast P_{ij} B_i$ is positive
semidefinite on $\mathbb{M}_m$, and each
$C_i^\ast Q_{ij} \mathbf{C}_i$ is positive semidefinite on
$\mathbb{M}_n$. Thus $\mathbf{Y}$ is a finite sum of tensor products of
positive semidefinite matrices, and hence $\mathbf{Y} \in \mathscr{P}_+$.
This proves that $\mathscr{P}_+$ is locally $C^\ast$-convex.
\end{proof}

Building upon the previous result, the following theorem confirms the exact local stability of the separable cone, showing that its local $C^*$-convex hull is precisely the cone itself.
\begin{theorem}
\label{thm:MCLloc-equals-Pplus}
Let $m,n \ge 2$. Then the local matrix-convex hull of the separable cone
coincides with the separable cone:
\[
\operatorname{MCL}_{\mathrm{loc}}(\mathscr{P}_+)
= \mathscr{P}_+.
\]
\end{theorem}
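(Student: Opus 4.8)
The plan is to deduce the equality directly from the minimality built into Definition~\ref{def:MCL-local}, using the structural result already established in Proposition~\ref{prop:Pplus-local-Cstar-convex}. The statement asserts that a hull operation fixes $\mathscr{P}_+$, and the standard strategy for such ``hull equals the set'' claims is to verify that the set in question already possesses every defining property of the hull, so that $\mathscr{P}_+$ is itself an admissible competitor in the defining intersection and hence cannot be strictly enlarged.

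First I would record the easy inclusion. By Definition~\ref{def:MCL-local}, $\operatorname{MCL}_{\mathrm{loc}}(\mathscr{P}_+)$ is a set \emph{containing} $\mathscr{P}_+$, so $\mathscr{P}_+ \subseteq \operatorname{MCL}_{\mathrm{loc}}(\mathscr{P}_+)$ holds by definition, with nothing further to prove.

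For the reverse inclusion I would invoke Proposition~\ref{prop:Pplus-local-Cstar-convex}, which supplies precisely the three properties required of a candidate hull: $\mathscr{P}_+$ is (a) a convex cone, (b) norm-closed, and (c) locally $C^\ast$-convex. Thus $\mathscr{P}_+$ is itself a closed locally $C^\ast$-convex cone that contains $\mathscr{P}_+$. Since $\operatorname{MCL}_{\mathrm{loc}}(\mathscr{P}_+)$ is by definition the \emph{smallest} such cone, minimality forces $\operatorname{MCL}_{\mathrm{loc}}(\mathscr{P}_+) \subseteq \mathscr{P}_+$. Combining the two inclusions yields the asserted equality.

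The proof therefore carries no genuine obstacle of its own; all the substantive work---in particular the closure under local Kraus combinations $\mathbf{A}_i = B_i \otimes C_i$ with $\sum_i (B_i^\ast B_i)\otimes(C_i^\ast C_i) = I_m \otimes I_n$---has already been discharged in Proposition~\ref{prop:Pplus-local-Cstar-convex}. If I wanted to be fully self-contained, the only point worth checking is that the family of closed locally $C^\ast$-convex cones containing $\mathscr{P}_+$ is nonempty, so that the ``smallest'' element invoked in Definition~\ref{def:MCL-local} is genuinely well-defined; but Proposition~\ref{prop:Pplus-local-Cstar-convex} exhibits $\mathscr{P}_+$ itself as one member of this family, which is all the minimality argument requires. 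This also makes transparent the contrast with the global case of Theorem~\ref{thm:strict-MCL}: the collapse there is driven by entangling (non-product) unitaries $\mathbf{U}$, which are excluded from the local Kraus families, so the mechanism producing non-separable rank-one projectors is unavailable under $\operatorname{MCL}_{\mathrm{loc}}$.
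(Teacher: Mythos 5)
Your proof is correct and follows essentially the same route as the paper: both inclusions are obtained exactly as in the paper's own argument, namely $\mathscr{P}_+ \subseteq \operatorname{MCL}_{\mathrm{loc}}(\mathscr{P}_+)$ by definition, and the reverse inclusion by citing Proposition~\ref{prop:Pplus-local-Cstar-convex} to exhibit $\mathscr{P}_+$ as a closed locally $C^\ast$-convex cone containing itself, so that minimality applies. Your added remark on the nonemptiness of the defining family (and hence well-definedness of the hull) is a sensible precision the paper leaves implicit, but it does not change the substance of the argument.
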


\begin{proof}
By Definition~\ref{def:MCL-local},
$\operatorname{MCL}_{\mathrm{loc}}(\mathscr{P}_+)$ is the intersection of all
closed locally $C^\ast$-convex cones $\mathcal{K} \subseteq \mathbb{M}_{mn}$
such that $\mathscr{P}_+ \subseteq \mathcal{K}$. In particular, by
Proposition~\ref{prop:Pplus-local-Cstar-convex}, the separable cone
$\mathscr{P}_+$ itself is a closed locally $C^\ast$-convex cone containing
$\mathscr{P}_+$. Therefore
\[
\operatorname{MCL}_{\mathrm{loc}}(\mathscr{P}_+)
\subseteq \mathscr{P}_+.
\]

On the other hand, by construction we always have
$\mathscr{P}_+ \subseteq \operatorname{MCL}_{\mathrm{loc}}(\mathscr{P}_+)$,
since $\operatorname{MCL}_{\mathrm{loc}}(\mathscr{P}_+)$ is defined as the
smallest closed locally $C^\ast$-convex cone containing $\mathscr{P}_+$. Thus
\[
\mathscr{P}_+ \subseteq \operatorname{MCL}_{\mathrm{loc}}(\mathscr{P}_+)
\subseteq \mathscr{P}_+,
\]
which implies the desired equality
$\operatorname{MCL}_{\mathrm{loc}}(\mathscr{P}_+) = \mathscr{P}_+$.
\end{proof}

We summarize the preceding analysis in the following no-go statement.

\begin{theorem}\label{thm:no-go}
\begin{enumerate}
\item[(a)]
Let $\mathcal{C}_{\mathrm{glob}}$ be the smallest closed cone
$\mathcal{K} \subseteq \mathbb{M}_{mn}$ such that
\begin{enumerate}[label=(\roman*)]
    \item $\mathscr{P}_+ \subseteq \mathcal{K}$,
    \item $\mathcal{K}$ is $C^\ast$-convex (with respect to arbitrary $C^*$-convex   families $\{\mathbf{A}_i\} \subset \mathbb{M}_{mn}$).
\end{enumerate}
Then
\[
\mathcal{C}_{\mathrm{glob}} = \mathscr{P}_0.
\]

\item[(b)]
Let $\mathcal{C}_{\mathrm{loc}}$ be the smallest closed cone
$\mathcal{K} \subseteq \mathbb{M}_{mn}$ such that
\begin{enumerate}[label=(\roman*)]
    \item $\mathscr{P}_+ \subseteq \mathcal{K}$,
    \item $\mathcal{K}$ is locally $C^\ast$-convex (with respect to local $C^*$-convex
    families of the form $\mathbf{A}_i = B_i \otimes C_i$).
\end{enumerate}
Then
\[
\mathcal{C}_{\mathrm{loc}} = \mathscr{P}_+.
\]
\end{enumerate}

In particular, under these two natural $C^\ast$-convexity schemes there is no
nontrivial closed convex cone $\mathcal{K}$ satisfying
\[
\mathscr{P}_+ \subsetneq \mathcal{K} \subsetneq \mathscr{P}_0.
\]
\end{theorem}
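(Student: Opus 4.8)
The plan is to assemble the two cone equalities directly from results already proved and then read off the final no-go clause as an immediate consequence; no genuinely new argument is needed, so the work lies entirely in stitching the earlier statements together in the correct order.

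For part (a) I would prove the two inclusions separately. The inclusion $\mathcal{C}_{\mathrm{glob}} \subseteq \mathscr{P}_0$ is immediate from minimality: Lemma~\ref{lem:MCL-in-PSD} already records that $\mathscr{P}_0$ is closed and $C^\ast$-convex, and it is visibly a cone containing $\mathscr{P}_+$, so it is one of the competitors in the definition of $\mathcal{C}_{\mathrm{glob}}$. For the reverse inclusion I would first note that $\mathcal{C}_{\mathrm{glob}}$ is in particular a closed $C^\ast$-convex \emph{set} containing $\mathscr{P}_+$; by the minimality defining $\operatorname{MCL}(\mathscr{P}_+)$ as the smallest such set, this gives $\operatorname{MCL}(\mathscr{P}_+) \subseteq \mathcal{C}_{\mathrm{glob}}$. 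Since $\mathcal{C}_{\mathrm{glob}}$ is a cone and (being $C^\ast$-convex) is linearly convex, it is closed under finite nonnegative combinations of its elements, whence $\operatorname{cone}(\operatorname{MCL}(\mathscr{P}_+)) \subseteq \mathcal{C}_{\mathrm{glob}}$. By Corollary~\ref{thm:cone-MCL-equals-P0} the left-hand side is exactly $\mathscr{P}_0$, so $\mathscr{P}_0 \subseteq \mathcal{C}_{\mathrm{glob}}$, completing the equality.

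Part (b) requires essentially no work: comparing Definition~\ref{def:MCL-local} with the defining conditions of $\mathcal{C}_{\mathrm{loc}}$ shows that both name the smallest closed locally $C^\ast$-convex cone containing $\mathscr{P}_+$, so $\mathcal{C}_{\mathrm{loc}} = \operatorname{MCL}_{\mathrm{loc}}(\mathscr{P}_+)$, and Theorem~\ref{thm:MCLloc-equals-Pplus} identifies this with $\mathscr{P}_+$. The concluding no-go clause is then a one-line corollary of (a) and (b): any cone generated by the global scheme equals $\mathscr{P}_0$ and any cone generated by the local scheme equals $\mathscr{P}_+$, so neither scheme can produce a cone $\mathcal{K}$ with $\mathscr{P}_+ \subsetneq \mathcal{K} \subsetneq \mathscr{P}_0$.

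The only real subtlety — rather than a true obstacle — is the bookkeeping distinction between the set-level hull $\operatorname{MCL}(\mathscr{P}_+)$, which is the smallest closed $C^\ast$-convex \emph{set}, and the cone $\mathcal{C}_{\mathrm{glob}}$, which is the smallest closed $C^\ast$-convex \emph{cone}. The argument must first establish the set inclusion $\operatorname{MCL}(\mathscr{P}_+) \subseteq \mathcal{C}_{\mathrm{glob}}$ and only afterwards upgrade to the conic statement via Corollary~\ref{thm:cone-MCL-equals-P0}; attempting to compare the cone $\mathcal{C}_{\mathrm{glob}}$ with $\mathscr{P}_0$ directly would force one to re-derive that every rank-one projector lies in the hull, duplicating the content of Corollary~\ref{thm:rank-one-in-MCL} and Corollary~\ref{thm:cone-MCL-equals-P0}.
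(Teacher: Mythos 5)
Your proposal is correct and takes essentially the same route as the paper, whose entire proof is to cite Corollary~\ref{thm:cone-MCL-equals-P0} for (a) and Theorem~\ref{thm:MCLloc-equals-Pplus} for (b). If anything, your write-up is more careful than the paper's: the identification of the smallest closed $C^\ast$-convex \emph{cone} $\mathcal{C}_{\mathrm{glob}}$ with $\operatorname{cone}\bigl(\operatorname{MCL}(\mathscr{P}_+)\bigr)$ — via the set inclusion $\operatorname{MCL}(\mathscr{P}_+) \subseteq \mathcal{C}_{\mathrm{glob}}$ and the fact that a $C^\ast$-convex cone is closed under nonnegative combinations — is precisely the bookkeeping the paper leaves implicit when it says (a) ``is exactly'' that corollary.
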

\begin{proof}
(a) is exactly Corollary~\ref{thm:cone-MCL-equals-P0}.
(b) is exactly Theorem~\ref{thm:MCLloc-equals-Pplus}.
\end{proof}

\section{The $k$-$C^*$-Convexity Hierarchy and Schmidt Number Cones}
The dramatic global collapse of the separable cone necessitates a mechanism for control. This section introduces the concept of $k$-$C^*$-convexity by constraining the $C^*$-coefficients via their operator Schmidt rank. The core result here establishes that this new algebraic hierarchy precisely characterizes the known Schmidt number cones ($\mathcal{T}_k$), thus bridging generalized non-commutative convexity with the resource theory of entanglement.

First, we recall some notions.
For a vector $v \in \mathbb{C}^m \otimes \mathbb{C}^n$, we denote by
$
\operatorname{SR}(v)
$
the smallest integer $r \ge 1$ for which there exist vectors
$x_1,\dots,x_r \in \mathbb{C}^m$ and $y_1,\dots,y_r \in \mathbb{C}^n$
such that
\[
v = \sum_{i=1}^r x_i \otimes y_i.
\]
For an operator $\A \in \mathbb{M}_{mn} \cong \mathbb{M}_m \otimes \mathbb{M}_n$, we denote by
$
\operatorname{OSR}(\A)
$
the \emph{operator Schmidt rank} of~$\A$, i.e., the smallest integer
$r \ge 1$ such that
\[
\A = \sum_{i=1}^r R_i \otimes S_i
\]
for some matrices $R_i \in \mathbb{M}_m$ and $S_i \in \mathbb{M}_n$.

\begin{lemma}\label{srank}
Let   $\operatorname{OSR}(\A) \le k$  and let
$v \in \mathbb{C}^m \otimes \mathbb{C}^n$ satisfy
$\operatorname{SR}(v) = r$. Then
\[
\operatorname{SR}(\A v) \;\le\;\operatorname{OSR}(\A) \operatorname{SR}(v).
\]
In particular, if $v$ is a simple tensor
(i.e.\ $\operatorname{SR}(v)=1$), then
$\operatorname{SR}(\A v) \le k$.
\end{lemma}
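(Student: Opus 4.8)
The plan is to expand both $\A$ and $v$ in (minimal) Schmidt decompositions and multiply them term by term, exploiting the elementary mixed-product identity $(R\otimes S)(x\otimes y)=(Rx)\otimes(Sy)$ for $R\in\mathbb{M}_m$, $S\in\mathbb{M}_n$, $x\in\mathbb{C}^m$, $y\in\mathbb{C}^n$. Concretely, I would first fix a minimal operator Schmidt decomposition
\[
\A = \sum_{i=1}^{p} R_i \otimes S_i,
\qquad p = \operatorname{OSR}(\A) \le k,
\]
and a minimal Schmidt decomposition of the vector,
\[
v = \sum_{j=1}^{r} x_j \otimes y_j,
\qquad r = \operatorname{SR}(v).
\]

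Next I would apply $\A$ to $v$ and distribute the product over both sums. Using the mixed-product identity on each pair of indices, this yields
\[
\A v
= \Bigl(\sum_{i=1}^{p} R_i \otimes S_i\Bigr)\Bigl(\sum_{j=1}^{r} x_j \otimes y_j\Bigr)
= \sum_{i=1}^{p}\sum_{j=1}^{r} (R_i x_j)\otimes(S_i y_j),
\]
which exhibits $\A v$ explicitly as a sum of $pr$ simple tensors $(R_i x_j)\otimes(S_i y_j)\in\mathbb{C}^m\otimes\mathbb{C}^n$. Since $\operatorname{SR}(\A v)$ is by definition the least number of simple tensors whose sum equals $\A v$, the above display gives the bound $\operatorname{SR}(\A v)\le pr = \operatorname{OSR}(\A)\,\operatorname{SR}(v)$, which is the claimed inequality. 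The stated special case is then immediate: when $v$ is a simple tensor we have $r=\operatorname{SR}(v)=1$, so $\operatorname{SR}(\A v)\le \operatorname{OSR}(\A)\le k$.

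There is essentially no serious obstacle here, since the argument is a direct consequence of the bilinearity of multiplication on $\mathbb{M}_m\otimes\mathbb{M}_n$ and the tensorial factorization of the action on product vectors. The one point that warrants a word of care is that we claim only an \emph{inequality}, not an identity: the $pr$ simple tensors produced above need not be linearly independent, and cancellations may well make the true Schmidt rank strictly smaller. Accordingly, I would be careful to phrase the final step as reading off an upper bound from an \emph{explicit} representation of $\A v$ as a sum of $pr$ product vectors, rather than asserting that this representation is minimal.
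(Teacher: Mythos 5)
Your proof is correct and follows essentially the same route as the paper's: expand $\A$ and $v$ as sums of elementary tensors, apply the mixed-product identity termwise, and read off an upper bound of (number of terms)$\times$(number of terms) on the Schmidt rank of $\A v$. If anything, your version is marginally tighter in bookkeeping, since you use a minimal decomposition with $p=\operatorname{OSR}(\A)$ terms and thus obtain the stated bound $\operatorname{OSR}(\A)\operatorname{SR}(v)$ directly, whereas the paper expands with $k\ge\operatorname{OSR}(\A)$ terms and concludes $\operatorname{SR}(\A v)\le kr$.
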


\begin{proof}
Since $\operatorname{OSR}(\A) \le k$, there exist matrices
$R_1,\dots,R_k \in \mathbb{M}_m$ and $S_1,\dots,S_k \in \mathbb{M}_n$ such that
\[
\A = \sum_{i=1}^k R_i \otimes S_i.
\]
Since $\operatorname{SR}(v) = r$, there exist vectors
$x_1,\dots,x_r \in \mathbb{C}^m$ and $y_1,\dots,y_r \in \mathbb{C}^n$ such that
\[
v = \sum_{j=1}^r x_j \otimes y_j.
\]
Then
\[
\A v
= \sum_{i=1}^k \sum_{j=1}^r
   (R_i x_j) \otimes (S_i y_j),
\]
so $\A v$ is written as a sum of at most $k r$ simple tensors.
Hence $\operatorname{SR}(\A v) \le k r$, as claimed.
The special case $\operatorname{SR}(v)=1$ is immediate.
\end{proof}

\begin{definition}[Tensor Rank $k$ Matrix Class $\mathcal{A}_k$]
For a positive integer $k$, define the tensor rank $k$ matrix class $\mathcal{A}_k$ by
\[
\mathcal{A}_k
    := \Bigl\{
        \X \in \mathbb{M}_{mn} \,\Big|\,
        \mathrm{OSR}(\X)\leq k
      \Bigr\}
\]
and the \emph{Schmidt number $k$} cone  by
\[
\mathcal{T}_k
:= \operatorname{cone}\Bigl\{
  v v^{*} \in \mathbb{M}_{mn} :
  v \in \mathbb{C}^{m}\otimes\mathbb{C}^{n},\
  \operatorname{SR}(v)\le k
\Bigr\}.
\]
\end{definition}

The cone $\mathcal{T}_k$ consist  of positive semidefinite matrices,  whose
\emph{Schmidt number} is at most $k$, in the sense of Terhal and
Horodecki.  It satisfies
\[
\mathcal{T}_1 = \mathscr{P}_{+},
\qquad
\mathcal{T}_{d} = \mathscr{P}_{0},
\qquad
\mathcal{T}_{1} \subsetneq \mathcal{T}_{2} \subsetneq \cdots \subsetneq \mathcal{T}_{d},
\]
where $d=\min\{m,n\}$.
The dual cone $\mathcal{T}_k^{*}$ consists of all block-positive
operators with respect to vectors of Schmidt rank at most $k$, and is
well known to correspond (via the Jamiołkowski isomorphism) to the cone
of $k$-positive linear maps.

\begin{definition}[$k$-$C^*$-Convex Hulls]
We define the {\emph{$k$-$C^*$-convex hull}} of a cone $\mathcal{K} \subset \mathbb{M}_{mn}$, denoted by $\operatorname{MCL}_k(\mathcal{K})$, to be the smallest closed cone containing $\mathcal{K}$,  which is invariant under all $C^*$-convex combinations whose generating matrices $\mathbf{A}_i$ are coming from  the class $\mathcal{A}_k$:
\[
\operatorname{MCL}_k(\mathcal{K}) = \operatorname{ccone} \left\{ \sum_{i=1}^N \mathbf{A}_i^* \mathbf{X}_i \mathbf{A}_i : N\in\mathbb{N},\,\,\mathbf{X}_i \in \mathcal{K},\, \mathbf{A}_i \in \mathcal{A}_k\,\text{ and } \, \sum_{i=1}^N \mathbf{A}_i^* \mathbf{A}_i = \mathbf{I} \right\}.
\]
Moreover, we define the \emph{conic $k$--$C^{*}$-convex hull} of  $\mathcal{K}$ by
\[
\operatorname{CMCL}_k(\mathcal{K})
:=  \operatorname{ccone}\left\{ \sum_{i=1}^N \mathbf{A}_i^* \mathbf{X}_i \mathbf{A}_i : N\in\mathbb{N},\,\,\mathbf{X}_i \in \mathcal{K},\, \mathbf{A}_i \in \mathcal{A}_k\,\text{ and } \, \sum_{i=1}^N \mathbf{A}_i^* \mathbf{A}_i \leq \mathbf{I} \right\}.
\]
\end{definition}

The next result  establishes the central equivalence of the section, showing that the hierarchy generated by $k$-$C^*$-convexity precisely coincides with the known Schmidt number cones $\mathcal{T}_k$, thereby providing a powerful algebraic characterization for these entanglement classes.
\begin{theorem}[Normalized $k$--$C^{*}$-hull and Schmidt number]
\label{thm:normalized-MCLk-inclusion}
For each $k\in\{1,\dots,d\}$ we have
\begin{align}\label{mclk-subseteq-Tk}
  \operatorname{MCL}_k(\mathscr{P}_{+})
  \subseteq \mathcal{T}_k
\end{align}
and
\begin{align}\label{cmcl=Tk}
 \operatorname{CMCL}_k(\mathscr{P}_{+})
  = \mathcal{T}_k.
\end{align}
Moreover,
\[
  \operatorname{CMCL}_1(\mathscr{P}_{+})=\operatorname{MCL}_1(\mathscr{P}_{+}) = \mathscr{P}_{+} = \mathcal{T}_1,
  \qquad
  \operatorname{CMCL}_d(\mathscr{P}_{+})=\operatorname{MCL}_d(\mathscr{P}_{+}) = \mathscr{P}_{0} = \mathcal{T}_d.
\]
\end{theorem}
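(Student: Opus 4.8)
The plan is to prove the two containments in $\mathcal{T}_k$ first (they hold for both hulls and form the easy direction), then the reverse containment for the conic hull $\operatorname{CMCL}_k$, and finally to treat the two endpoints separately. I would begin by recording that $\mathcal{T}_k$ is a closed convex cone: its generating set $\{vv^* : \|v\|=1,\ \operatorname{SR}(v)\le k\}$ is compact, since the unit vectors of Schmidt rank at most $k$ form a closed (determinantal) and bounded set, so the conic hull over this compact set is closed. To get $\operatorname{MCL}_k(\mathscr{P}_{+})\subseteq\mathcal{T}_k$ and $\operatorname{CMCL}_k(\mathscr{P}_{+})\subseteq\mathcal{T}_k$ at once, it then suffices to check that every generator $\sum_i \mathbf{A}_i^*\mathbf{X}_i\mathbf{A}_i$ (with $\mathbf{X}_i\in\mathscr{P}_{+}$, $\mathbf{A}_i\in\mathcal{A}_k$, and $\sum_i\mathbf{A}_i^*\mathbf{A}_i\le\mathbf{I}$) already lies in the closed cone $\mathcal{T}_k$. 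Writing each $\mathbf{X}_i=\sum_j\lambda_{ij}v_{ij}v_{ij}^*$ with $v_{ij}$ simple tensors, I would expand $\mathbf{A}_i^*\mathbf{X}_i\mathbf{A}_i=\sum_j\lambda_{ij}(\mathbf{A}_i^* v_{ij})(\mathbf{A}_i^* v_{ij})^*$ and invoke Lemma~\ref{srank}: since $\operatorname{OSR}(\mathbf{A}_i^*)=\operatorname{OSR}(\mathbf{A}_i)\le k$ and $\operatorname{SR}(v_{ij})=1$, each vector $\mathbf{A}_i^* v_{ij}$ has Schmidt rank at most $k$, so each rank-one term is a generator of $\mathcal{T}_k$. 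Note that the normalization plays no role here, which is why this direction is insensitive to whether we demand $=\mathbf{I}$ or $\le\mathbf{I}$.

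For the reverse containment $\mathcal{T}_k\subseteq\operatorname{CMCL}_k(\mathscr{P}_{+})$, I would realize each extreme generator $vv^*$ by a \emph{single} sub-normalized Kraus operator. Normalizing to $\hat v=v/\|v\|$ and fixing a product unit vector $u=e_1\otimes e_1$, set $\mathbf{A}:=u\hat v^*$ and $\mathbf{X}:=uu^*\in\mathscr{P}_{+}$. A direct computation gives $\mathbf{A}^*\mathbf{X}\mathbf{A}=(\mathbf{A}^*u)(\mathbf{A}^*u)^*=\hat v\hat v^*$ and $\mathbf{A}^*\mathbf{A}=\hat v\hat v^*\le\mathbf{I}$, while $\operatorname{OSR}(\mathbf{A})=\operatorname{SR}(\hat v)\le k$, so $\hat v\hat v^*$ lies in the defining generating set of $\operatorname{CMCL}_k(\mathscr{P}_{+})$. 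Since this hull is a closed cone, $vv^*=\|v\|^2\,\hat v\hat v^*$ lies in it as well, and taking the conic closure over all such $v$ yields $\mathcal{T}_k\subseteq\operatorname{CMCL}_k(\mathscr{P}_{+})$; together with the previous paragraph this gives \eqref{cmcl=Tk}.

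It remains to settle the endpoints. At $k=1$ the general inclusion gives $\operatorname{MCL}_1(\mathscr{P}_{+})\subseteq\mathcal{T}_1=\mathscr{P}_{+}$, while the single term $\mathbf{A}_1=\mathbf{I}\in\mathcal{A}_1$ shows $\mathscr{P}_{+}\subseteq\operatorname{MCL}_1(\mathscr{P}_{+})$, so equality holds; this is consistent with Proposition~\ref{prop:Pplus-local-Cstar-convex} and Theorem~\ref{thm:MCLloc-equals-Pplus}, as $\mathcal{A}_1$ is exactly the class of product operators $B\otimes C$. At $k=d$ every $w\in\mathbb{C}^{mn}$ satisfies $\operatorname{SR}(w)\le d$, so $\mathcal{T}_d=\mathscr{P}_{0}$, and only $\mathscr{P}_{0}\subseteq\operatorname{MCL}_d(\mathscr{P}_{+})$ remains. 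By the spectral theorem it suffices to realize each $ww^*$ (with $\|w\|=1$) as a \emph{normalized} $d$-$C^*$-combination: I would take the term $\mathbf{A}_1=uw^*$ with $\mathbf{X}_1=uu^*$ (contributing $ww^*$, with $\mathbf{A}_1^*\mathbf{A}_1=ww^*$) and adjoin ``filler'' terms paired with $\mathbf{X}=0\in\mathscr{P}_{+}$ whose Kraus operators factor the residual $\mathbf{I}-ww^*\ge 0$. Writing $\mathbf{I}-ww^*=\sum_\alpha\mu_\alpha\phi_\alpha\phi_\alpha^*$ spectrally and setting $\mathbf{B}_\alpha=\sqrt{\mu_\alpha}\,u\,\phi_\alpha^*$ gives $\operatorname{OSR}(\mathbf{B}_\alpha)=\operatorname{SR}(\phi_\alpha)\le d$, so all fillers lie in $\mathcal{A}_d$; the augmented family is normalized and its combination still equals $ww^*$.

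I expect the genuine obstacle to be exactly this last augmentation step, and more broadly the mismatch between the exact constraint $\sum_i\mathbf{A}_i^*\mathbf{A}_i=\mathbf{I}$ defining $\operatorname{MCL}_k$ and the slack constraint $\le\mathbf{I}$ defining $\operatorname{CMCL}_k$. The completion to the identity succeeds at $k=d$ precisely because every vector in $\mathbb{C}^{mn}$ has Schmidt rank at most $d$, keeping the completing operators inside $\mathcal{A}_d$; for intermediate $k$ the eigenvectors of $\mathbf{I}-ww^*$ may have Schmidt rank exceeding $k$, obstructing the completion and explaining why only the one-sided bound \eqref{mclk-subseteq-Tk} is asserted for $\operatorname{MCL}_k$ in the general case.
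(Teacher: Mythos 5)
Your proof is correct. For \eqref{mclk-subseteq-Tk}, for \eqref{cmcl=Tk}, and for the endpoint $k=1$, it follows essentially the paper's own route: expand each separable $\mathbf{X}_i$ into product rank-one terms, apply Lemma~\ref{srank} to the vectors $\mathbf{A}_i^* v_{ij}$, and realize each generator $vv^*$ of $\mathcal{T}_k$ by a single sub-normalized rank-one Kraus operator $u v^*$ acting on a separable state (the paper conjugates $\mathbf{I}$ rather than $uu^*$; this difference is immaterial). Two of your refinements improve on the paper's write-up: you correctly work with $\mathbf{A}_i^* v_{ij}$ and note $\operatorname{OSR}(\mathbf{A}_i^*)=\operatorname{OSR}(\mathbf{A}_i)$, where the paper writes $w_{ij}:=\mathbf{A}_i v_{ij}$ (a slip, since conjugation produces $\mathbf{A}_i^* v_{ij}$), and you verify that $\mathcal{T}_k$ is closed, a fact the paper uses silently when it ``takes closures'' inside $\mathcal{T}_k$. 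The genuine divergence is at $k=d$. The paper's proof rests on the identification $\mathcal{A}_d=\mathbb{M}_{mn}$ and then invokes Corollary~\ref{thm:cone-MCL-equals-P0}; but that identification is false: operator Schmidt ranks in $\mathbb{M}_m\otimes\mathbb{M}_n$ can be as large as $\min(m^2,n^2)$ (for $m=n$ the swap operator $\sum_{i,j}E_{ij}\otimes E_{ji}$ has $\operatorname{OSR}=n^2$), so $\mathcal{A}_{\min(m,n)}$ is a proper subclass of $\mathbb{M}_{mn}$ and the reduction to the global hull does not go through as written. Your construction --- the single Kraus operator $uw^*$ carrying $uu^*$ to $ww^*$, plus zero-state fillers $\sqrt{\mu_\alpha}\,u\,\phi_\alpha^*$ completing the family to exact normalization, all of operator Schmidt rank at most $d$ because every vector of $\mathbb{C}^m\otimes\mathbb{C}^n$ has Schmidt rank at most $d$ --- avoids the false claim entirely and proves $\mathscr{P}_0\subseteq\operatorname{MCL}_d(\mathscr{P}_+)$ directly; it is in fact the same device the paper deploys later for the PPT collapse (Theorem~\ref{collapse-MCL(PPT)}), so your endpoint argument simultaneously repairs a gap in the paper's proof of this theorem and unifies it with the Section~5 technique. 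Your closing diagnosis --- that completion to exact normalization inside $\mathcal{A}_k$ is what fails for intermediate $k$ --- is also the correct explanation of why only the one-sided inclusion \eqref{mclk-subseteq-Tk} is asserted for $\operatorname{MCL}_k$.
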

\begin{proof}
We first prove \eqref{mclk-subseteq-Tk}.
Let $\Y \in \operatorname{MCL}_k(\mathscr{P}_{+})$.
By definition there exist $\X_i \in \mathscr{P}_{+}$ and
$\A_i \in \mathcal{A}_k$ such that
\[
  \Y
  = \sum_{i=1}^{N} \A_i^{*} \X_i \A_i
  \qquad\mbox{and}\qquad
  \sum_{i=1}^{N} \A_i^{*} \A_i = I,
\]
and $\Y$ is a limit of such finite sums.
Each $\X_i$ is separable, hence
\[
  \X_i
  = \sum_{j} \lambda_{ij}\, v_{ij} v_{ij}^{*},
  \quad
  \lambda_{ij}\ge 0,\quad
  v_{ij} = p_{ij} \otimes q_{ij}.
\]
Therefore
\[
  \Y = \sum_{i,j} \lambda_{ij}\, w_{ij} w_{ij}^{*},
  \qquad
  w_{ij} := \A_i v_{ij}.
\]
Since $\A_i \in \mathcal{A}_k$,  Lemma~\ref{srank}  implies that  $\operatorname{SR}(w_{ij})\le k$ for all $i,j$.
Hence each rank--one term $w_{ij} w_{ij}^{*}$ lies in $\mathcal{T}_k$ and,
since $\mathcal{T}_k$ is a cone, $\Y\in\mathcal{T}_k$. Taking closures yields
\(\operatorname{MCL}_k(\mathscr{P}_{+})\subseteq \mathcal{T}_k\).

Now we show \eqref{cmcl=Tk}.
Let
\[
\Y = \sum_{i=1}^{N} \A_i^{*} \X_i \A_i
\]
be a finite conic combination with $\X_i\in\mathscr{P}_+$,
$\A_i\in\mathcal{A}_k$, and $\sum_i \A_i^{*} \A_i \le \I$.
Write each $\X_i$ in the standard separable form
\[
\X_i = \sum_{j} \lambda_{ij} v_{ij} v_{ij}^{*},
\qquad
v_{ij}=p_{ij}\otimes q_{ij}.
\]
Then
\[
\Y = \sum_{i,j} \lambda_{ij}\, w_{ij} w_{ij}^{*},
\qquad
w_{ij} := \A_i v_{ij}.
\]
By Lemma~\ref{srank}, each $w_{ij}$ satisfies
$\operatorname{SR}(w_{ij})\le k$.  Hence every term
$w_{ij} w_{ij}^{*}$ lies in $\mathcal{T}_k$, and thus
$\Y\in\mathcal{T}_k$.  Taking conic combinations and closures yields the
inclusion $\operatorname{CMCL}_k(\mathscr{P}_+) \subseteq \mathcal{T}_k$.

For the converse, let $v$ be a unit vector with
$\operatorname{SR}(v)\le k$, and choose any product unit vector
$u = a\otimes b$.
Define the rank--one operator $\A := u v^{*}$.
Then $\operatorname{OSR}(\A)=\operatorname{SR}(v)\le k$, so $\A\in\mathcal{A}_k$.
Since $\I = I_m\otimes I_n$ is separable, $\I\in\mathscr{P}_+$, and
\[
\A^{*} \I \A = \A^{*} \A = v v^{*},
\qquad
\A^{*}\A = v v^{*} \le \I.
\]
Thus the single-Kraus expression $\A^{*} \I \A$ lies in
$\operatorname{CMCL}_k(\mathscr{P}_+)$.
Taking conic combinations and closures yields
$\mathcal{T}_k \subseteq \operatorname{CMCL}_k(\mathscr{P}_+)$.

Combining the two inclusions proves the claimed equality \eqref{cmcl=Tk}.

For $k=1$ we have $\mathcal{A}_1$ equal to the set of product operators
$B\otimes C$, so $\operatorname{MCL}_1(\mathscr{P}_+)$ coincides with the
local $C^{*}$-convex hull.  By Proposition~\ref{prop:Pplus-local-Cstar-convex}
and Theorem~\ref{thm:MCLloc-equals-Pplus} we know
$\operatorname{MCL}_1(\mathscr{P}_+) = \mathscr{P}_+$.  Since every vector of
Schmidt rank~$1$ is a product vector, we have $\mathcal{T}_1 = \mathscr{P}_+$.

For $k=d$ we have $\mathcal{A}_d = \mathbb{M}_{mn}$, so
$\operatorname{MCL}_d(\mathscr{P}_+)$ coincides with the global
$C^{*}$-convex hull of $\mathscr{P}_{+}$.  By
Corollary~\ref{thm:cone-MCL-equals-P0} this is $\mathscr{P}_0$, and by
definition $\mathcal{T}_d=\mathscr{P}_0$ as well.

\end{proof}


The equivalence $\operatorname{CMCL}_k(\mathscr{P}_+) = \mathcal{T}_k$ gains significant structural weight when examined through the lens of duality. It is well-known that the dual cone of the Schmidt number cone $\mathcal{T}_k$ is the $k$-Positive Cone:
\[
\mathcal{T}_k^{\circ} = \mathcal{P}_k,
\]
where $\mathcal{P}_k$ is the cone of $k$-Positive operators (or Schmidt number $k$ entanglement witnesses) \cite{TerhalHorodecki2000}.
Elements of $\mathcal{P}_k$ are those entanglement witnesses $\mathbf{W} \in \mathscr{P}_- $ with the property that the expectation value $\mathrm{Tr}(\mathbf{W} \rho)$ remains non-negative for all quantum states $\rho$ with Schmidt rank at most $k$.

The main result of this section leverages the structural equivalence proved in Theorem~ \ref{thm:normalized-MCLk-inclusion} to provide a novel algebraic characterization of $\mathcal{P}_k$.

\begin{corollary}[Algebraic Characterization of the $k$-Positive Cone]
The $k$-Positive Cone $\mathcal{P}_k$ is precisely characterized as a dual cone via
\[
\mathcal{P}_k = \operatorname{CMCL}_k(\mathscr{P}_+)^{\circ}.
\]
\end{corollary}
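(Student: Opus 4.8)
The plan is to obtain the corollary as a purely formal consequence of the set-equality $\operatorname{CMCL}_k(\mathscr{P}_{+}) = \mathcal{T}_k$ proved in Theorem~\ref{thm:normalized-MCLk-inclusion} (equation~\eqref{cmcl=Tk}), combined with the classical duality $\mathcal{T}_k^{\circ} = \mathcal{P}_k$ recorded above. The key observation is that the dual-cone (polar) operation
\[
\mathcal{K} \mapsto \mathcal{K}^{\circ}
:= \bigl\{ \mathbf{W} = \mathbf{W}^{*} \in \mathbb{M}_{mn} :
\operatorname{Tr}(\mathbf{W}\mathbf{X}) \ge 0 \ \text{for all } \mathbf{X} \in \mathcal{K} \bigr\}
\]
depends only on the underlying subset $\mathcal{K}$ of the real space of Hermitian matrices. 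Consequently, two cones that coincide as sets must have identical dual cones.

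Concretely, I would proceed in two steps. First, invoke \eqref{cmcl=Tk} to read $\operatorname{CMCL}_k(\mathscr{P}_{+}) = \mathcal{T}_k$ as a literal equality of subsets of $\mathbb{M}_{mn}$, and apply $(\cdot)^{\circ}$ to both sides; since equal sets have equal polars, this yields
\[
\operatorname{CMCL}_k(\mathscr{P}_{+})^{\circ} = \mathcal{T}_k^{\circ}.
\]
Second, substitute the known identification $\mathcal{T}_k^{\circ} = \mathcal{P}_k$, which expresses that a Hermitian $\mathbf{W}$ pairs non-negatively with every $v v^{*}$ of Schmidt rank at most $k$ exactly when $\mathbf{W}$ is a $k$-positive witness. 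Chaining the two displays gives $\operatorname{CMCL}_k(\mathscr{P}_{+})^{\circ} = \mathcal{P}_k$, which is precisely the assertion.

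I do not expect any genuine obstacle: the entire mathematical content is carried by Theorem~\ref{thm:normalized-MCLk-inclusion}, and what remains is the trivial fact that the polar map respects set equality. The only points worth a sentence for rigour are bookkeeping ones. First, $\operatorname{CMCL}_k(\mathscr{P}_{+})$ is by construction a closed convex cone, since it is defined through $\operatorname{ccone}$; this guarantees that the Terhal--Horodecki duality framework applies verbatim to $\mathcal{T}_k$ on the right-hand side. Second, one should confirm that the two notations $(\cdot)^{*}$ and $(\cdot)^{\circ}$ appearing in the preceding discussion both denote the same trace-pairing dual cone, so that $\mathcal{T}_k^{\circ} = \mathcal{P}_k$ is the statement being used. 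Neither point affects the argument, because it relies solely on $\operatorname{CMCL}_k(\mathscr{P}_{+})$ and $\mathcal{T}_k$ being the same set, after which taking duals is automatic.
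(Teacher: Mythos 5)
Your proposal is correct and follows essentially the same route as the paper, whose own proof is a one-line appeal to Theorem~\ref{thm:normalized-MCLk-inclusion} together with the duality $\mathcal{T}_k^{\circ} = \mathcal{P}_k$; you simply spell out the formal step (equal sets have equal polars) that the paper leaves implicit. No gap to report.
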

\begin{proof}
This is a direct consequence of the main structural result (Theorem~\ref{thm:normalized-MCLk-inclusion}) and the duality relationship for closed convex cones.
\end{proof}

\section{Application to the PPT Cone and Intermediate Witnesses}

Having fully characterized the separable cone ($\mathscr{P}_+$) under the $k$-$C^*$-convexity hierarchy, we now turn our attention to the Positive Partial Transpose (PPT) cone ($\mathcal{P}_{\mathrm{PPT}}$). The analysis of $\mathcal{P}_{\mathrm{PPT}}$, which contains the bound entangled states, is crucial. This application generates a new family of conjectured intermediate cones, whose duals define the Tensor Rank $k$ $C^*$-Convex Witnesses proposed for probing the boundaries of bound entanglement.

Denote by $\Gamma$ the partial transpose with respect to the second factor,
\[
\Gamma := \operatorname{id}_{\mathbb{M}_m} \otimes \mathrm{T},
\]
where $\mathrm{T}$ is the transpose map on $\mathbb{M}_n$, and define the \emph{Positive Partial Transpose Cone}  by
\[
\mathcal{P}_{\mathrm{PPT}} := \{\X \in \mathbb{M}_{mn} : \X \ge 0 \text{ and }\Gamma(\X) \ge 0 \}.
\]
Note that $\mathcal{P}_{\mathrm{PPT}}$ strictly contains the product-positive cone, $\mathscr{P}_+ = \mathcal{T}_1$, whenever $\min(m, n) \ge 3$.

Let $\mathcal{C}_{\mathrm{PPT}, k} = \operatorname{MCL}_k(\mathcal{P}_{\mathrm{PPT}})$ be the $k$-$C^*$-convex hull of the Positive Partial Transpose Cone. This family naturally sits within the overall structural hierarchy. Since $\mathscr{P}_+ \subset \mathcal{P}_{\mathrm{PPT}} \subset \mathscr{P}_0$, and given the established structural results, particularly $\operatorname{CMCL}_k(\mathscr{P}_+) = \mathcal{T}_k$ and $\operatorname{MCL}_k(\mathscr{P}_0) = \mathscr{P}_0$, the application of the $\operatorname{MCL}_k$ operation yields a clear chain of inclusions:
\[
\operatorname{MCL}_k(\mathscr{P}_+) \subseteq \mathcal{C}_{\mathrm{PPT}, k} \subseteq \mathscr{P}_0.
\]
The key question is whether $\mathcal{C}_{\mathrm{PPT}, k}$ equals $\operatorname{MCL}_k(\mathscr{P}_+)$, collapses to $\mathscr{P}_0$ before the critical rank $d$, or, most intriguingly, constitutes a non-trivial intermediate cone. This structural inquiry is central to classifying bound entanglement.

We first demonstrate the stability of the PPT cone under the most restrictive level of the $k$-$C^*$-convexity hierarchy. The following theorem proves that its $1$-$C^*$-convex hull, which corresponds to combinations generated by rank-one operators, leaves $\mathcal{P}_{\mathrm{PPT}}$ invariant.

\begin{theorem}[Stability of $\mathcal{P}_{\mathrm{PPT}}$]
The $1$-C\*-convex hull of the Positive Partial Transpose Cone $\mathcal{P}_{\mathrm{PPT}}$ is equal to $\mathcal{P}_{\mathrm{PPT}}$ itself:
\[
\operatorname{MCL}_{1}(\mathcal{P}_{\mathrm{PPT}}) = \mathcal{P}_{\mathrm{PPT}}.
\]
\end{theorem}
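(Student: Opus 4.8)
The plan is to mirror the minimality argument used in Theorem~\ref{thm:MCLloc-equals-Pplus}. By definition $\operatorname{MCL}_1(\mathcal{P}_{\mathrm{PPT}})$ is the smallest closed cone containing $\mathcal{P}_{\mathrm{PPT}}$ that is invariant under $1$-$C^*$-convex combinations, i.e.\ those whose Kraus coefficients lie in $\mathcal{A}_1$ and hence are product operators $B\otimes C$. The inclusion $\mathcal{P}_{\mathrm{PPT}}\subseteq\operatorname{MCL}_1(\mathcal{P}_{\mathrm{PPT}})$ is automatic, so it suffices to prove the reverse inclusion. For this I would show that $\mathcal{P}_{\mathrm{PPT}}$ is \emph{itself} a closed cone already invariant under $1$-$C^*$-convex combinations; minimality then forces $\operatorname{MCL}_1(\mathcal{P}_{\mathrm{PPT}})\subseteq\mathcal{P}_{\mathrm{PPT}}$.

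First I would dispatch the topological part: writing $\mathcal{P}_{\mathrm{PPT}} = \mathscr{P}_0 \cap \Gamma^{-1}(\mathscr{P}_0)$, and noting that $\Gamma$ is a linear (hence continuous) involution and $\mathscr{P}_0$ is a closed cone, we see that $\Gamma^{-1}(\mathscr{P}_0)$ is a closed cone and the intersection of two closed cones is again a closed cone. The substance lies in verifying invariance. Fix $\mathbf{X}_i\in\mathcal{P}_{\mathrm{PPT}}$ and product coefficients $\mathbf{A}_i = B_i\otimes C_i$ with $\sum_i \mathbf{A}_i^*\mathbf{A}_i=\mathbf{I}$, and set $\mathbf{Y}:=\sum_i \mathbf{A}_i^*\mathbf{X}_i\mathbf{A}_i$. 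Positivity $\mathbf{Y}\ge 0$ is immediate, since conjugation preserves positive semidefiniteness and positivity is closed under finite sums.

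The key step is to control $\Gamma(\mathbf{Y})$. I would establish the commutation identity
\[
\Gamma\bigl((B\otimes C)^*\,\mathbf{X}\,(B\otimes C)\bigr)
= (B\otimes \overline{C})^*\,\Gamma(\mathbf{X})\,(B\otimes \overline{C}),
\]
valid for every $\mathbf{X}\in\mathbb{M}_{mn}$, where $\overline{C}$ denotes the entrywise complex conjugate of $C$. By linearity of both sides it suffices to check it on elementary tensors $\mathbf{X}=P\otimes Q$, where both sides reduce to $(B^*PB)\otimes\bigl(C^{T} Q^{T} \overline{C}\bigr)$ after using $(C^*QC)^{T} = C^{T} Q^{T} \overline{C}$ together with $\overline{C}^{\,*}=C^{T}$. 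Granting the identity, linearity of $\Gamma$ gives
\[
\Gamma(\mathbf{Y}) = \sum_i (B_i\otimes\overline{C_i})^*\,\Gamma(\mathbf{X}_i)\,(B_i\otimes\overline{C_i}).
\]
Since each $\mathbf{X}_i\in\mathcal{P}_{\mathrm{PPT}}$ satisfies $\Gamma(\mathbf{X}_i)\ge 0$, every summand is a conjugation of a positive operator and hence positive, so $\Gamma(\mathbf{Y})\ge 0$. Combined with $\mathbf{Y}\ge 0$ this yields $\mathbf{Y}\in\mathcal{P}_{\mathrm{PPT}}$, establishing invariance and completing the argument.

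I expect the commutation identity to be the main obstacle, in the sense that everything hinges on it. Its validity relies essentially on the \emph{product} structure of the $\mathcal{A}_1$ coefficients, which is exactly what lets the transpose be absorbed into the modified product conjugation $B\otimes\overline{C}$; this feature has no analogue for coefficients of higher operator Schmidt rank, consistent with the collapse phenomenon observed for the unconstrained global hull. The one place where care is required is the conjugate--transpose bookkeeping, namely the relations $\overline{C}^{\,*}=C^{T}$ and $(C^*QC)^{T}=C^{T}Q^{T}\overline{C}$, where sign or conjugation slips would be easy to make; verifying these cleanly is what makes the identity go through.
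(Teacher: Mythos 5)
Your proof is correct and takes essentially the same route as the paper: both arguments reduce the reverse inclusion to showing that $\mathcal{P}_{\mathrm{PPT}}$ is itself a closed cone stable under $\mathcal{A}_1$-Kraus conjugations, with the key step being that the partial transpose commutes with conjugation by product operators. In fact your commutation identity is the more careful one: the paper writes $\Gamma(\mathbf{A}^*\mathbf{X}\mathbf{A})=\Gamma(\mathbf{A})^*\Gamma(\mathbf{X})\Gamma(\mathbf{A})$ with $\Gamma(B\otimes C)=B\otimes C^T$, which overlooks that the transpose reverses products (the correct conjugating operator is $B\otimes\overline{C}$, exactly as you have it); this slip is harmless for the paper's conclusion, since conjugation by $B\otimes C^T$ also preserves positive semidefiniteness, but your bookkeeping via $\overline{C}^{\,*}=C^T$ is the precise form of the identity.
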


\begin{proof}
The proof is established by demonstrating the two inclusions: $\operatorname{MCL}_{1}(\mathcal{P}_{\mathrm{PPT}}) \subseteq \mathcal{P}_{\mathrm{PPT}}$ and $\mathcal{P}_{\mathrm{PPT}} \subseteq \operatorname{MCL}_{1}(\mathcal{P}_{\mathrm{PPT}})$.

Part 1: $\operatorname{MCL}_{1}(\mathcal{P}_{\mathrm{PPT}}) \subseteq \mathcal{P}_{\mathrm{PPT}}$.
We show that the action of the $1$-C\*-convex operation preserves the PPT property.
Every  $\Y\in\operatorname{MCL}_{1}(\mathcal{P}_{\mathrm{PPT}})$  has the form
\[
\mathbf{Y} = \sum_{i=1}^N \mathbf{A}_i^* \mathbf{X}_i \mathbf{A}_i,
\]
where $\mathbf{X}_i \in \mathcal{P}_{\mathrm{PPT}}$ and $\mathbf{A}_i \in \mathcal{A}_{1}$ (i.e., $\mathrm{TR}(\mathbf{A}_i) = 1$). Evidently,  $\mathbf{X}_i \ge 0$ for all $i$ implies that $\mathbf{Y}$ is positive semi-definite.

To prove $\mathbf{Y} \in \mathcal{P}_{\mathrm{PPT}}$, we must show that the partial transpose $\Gamma(\mathbf{Y})$ is also positive semi-definite.
The key identity relies on $\mathbf{A}_i$ being a product matrix ($\mathrm{OSR}(\mathbf{A}_i)=1$): $\mathbf{A}_i = A_{i, m} \otimes A_{i, n}$. For such product operators, the partial transpose acts as:
\[
\Gamma(\mathbf{A}_i^* \mathbf{X}_i \mathbf{A}_i) = \Gamma(\mathbf{A}_i^*) \Gamma(\mathbf{X}_i) \Gamma(\mathbf{A}_i)=\Gamma(\mathbf{A}_i)^* \Gamma(\mathbf{X}_i) \Gamma(\mathbf{A}_i)
\]
where $\Gamma(\mathbf{A}_i) = A_{i, m} \otimes A_{i, n}^T$.

Substituting this back into the sum to compute $\Gamma(\mathbf{Y})$ we obtain
\[
\Gamma(\mathbf{Y}) = \sum_{i=1}^N \Gamma(\mathbf{A}_i)^* \Gamma(\mathbf{X}_i) \Gamma(\mathbf{A}_i).
\]
Since $\mathbf{X}_i \in \mathcal{P}_{\mathrm{PPT}}$, the input matrix $\Gamma(\mathbf{X}_i)$ is positive semi-definite.  Consequently, $\Gamma(\mathbf{Y})\geq0$ and so  we have $\mathbf{Y} \in \mathcal{P}_{\mathrm{PPT}}$. Since $\mathcal{P}_{\mathrm{PPT}}$ is closed, the inclusion $\operatorname{MCL}_{1}(\mathcal{P}_{\mathrm{PPT}}) \subseteq \mathcal{P}_{\mathrm{PPT}}$ is proven.

Part 2: $\mathcal{P}_{\mathrm{PPT}} \subseteq \operatorname{MCL}_{1}(\mathcal{P}_{\mathrm{PPT}})$.
 Let $\mathbf{X} \in \mathcal{P}_{\mathrm{PPT}}$ be arbitrary.
Recall that a Kraus operator $\mathbf{A}$ is admissible for $\operatorname{MCL}_{1}$ precisely when it has operator Schmidt rank $1$, that is,
\[
\operatorname{OSR}(\mathbf{A}) = 1.
\]
The identity operator factorizes as
\[
\mathbf{I} = I_m \otimes I_n,
\]
and therefore satisfies $\operatorname{OSR}(\mathbf{I}) = 1$.
Hence $\mathbf{I}$ is an allowed Kraus operator in $\operatorname{MCL}_{1}$.

Now observe that $\mathbf{X}$ admits the trivial one-term $C^{*}$-convex decomposition
\[
\mathbf{X}
= \mathbf{I}^{*}\, \mathbf{X}\, \mathbf{I},
\qquad\text{with}\qquad
\mathbf{I}^{*}\mathbf{I} = \mathbf{I}.
\]
This is a valid $1$--$C^{*}$-combination with input $\mathbf{X}$ and Kraus operator
$\mathbf{A}_1 = \mathbf{I}$, and since $\mathbf{X} \in \mathcal{P}_{\mathrm{PPT}}$
we conclude that
\[
\mathbf{X} \in \operatorname{MCL}_{1}(\mathcal{P}_{\mathrm{PPT}}).
\]

Therefore, every $\mathbf{X} \in \mathcal{P}_{\mathrm{PPT}}$ is contained in the $1$-C\*-convex hull: $\mathcal{P}_{\mathrm{PPT}} \subseteq \operatorname{MCL}_{1}(\mathcal{P}_{\mathrm{PPT}})$.

Combining the two inclusions yields $\operatorname{MCL}_{1}(\mathcal{P}_{\mathrm{PPT}}) = \mathcal{P}_{\mathrm{PPT}}$.
\end{proof}

This final collapse theorem establishes the critical rank $k_c = \min(m, n)$ at which the $k$-$C^*$-convex hull of the PPT cone expands to encompass the entire positive cone, completing the hierarchy with the maximal collapse result.

\begin{theorem}\label{collapse-MCL(PPT)}
Let $d = \min(m, n)$. The $d$-$C^*$-convex hull of the Positive Partial Transpose Cone $\mathcal{P}_{\mathrm{PPT}}$ is equal to the full cone of positive semi-definite matrices $\mathscr{P}_0$:
\[
\operatorname{MCL}_{d}(\mathcal{P}_{\mathrm{PPT}}) = \mathscr{P}_0, \quad \text{where } d = \min(m, n).
\]
\end{theorem}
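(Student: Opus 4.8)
The plan is to establish the equality by squeezing $\operatorname{MCL}_d(\mathcal{P}_{\mathrm{PPT}})$ between two copies of $\mathscr{P}_0$, using the inclusions $\mathscr{P}_+ \subseteq \mathcal{P}_{\mathrm{PPT}} \subseteq \mathscr{P}_0$ together with the endpoint value $\operatorname{MCL}_d(\mathscr{P}_+) = \mathscr{P}_0$ already recorded in Theorem~\ref{thm:normalized-MCLk-inclusion}. First I would dispose of the upper inclusion $\operatorname{MCL}_d(\mathcal{P}_{\mathrm{PPT}}) \subseteq \mathscr{P}_0$. Every element of the generating set in the definition of $\operatorname{MCL}_d$ has the form $\sum_i \mathbf{A}_i^* \mathbf{X}_i \mathbf{A}_i$ with $\mathbf{X}_i \in \mathcal{P}_{\mathrm{PPT}} \subseteq \mathscr{P}_0$; each conjugation $\mathbf{A}_i^* \mathbf{X}_i \mathbf{A}_i$ is then positive semidefinite, so the sum lies in $\mathscr{P}_0$. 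As $\mathscr{P}_0$ is a closed cone, its closed conic hull remains inside it, and the upper inclusion follows exactly as in Lemma~\ref{lem:MCL-in-PSD}.

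For the reverse inclusion I would use the monotonicity of $\mathcal{K} \mapsto \operatorname{MCL}_d(\mathcal{K})$: enlarging the input cone $\mathcal{K}$ only enlarges the admissible generators $\sum_i \mathbf{A}_i^* \mathbf{X}_i \mathbf{A}_i$ (the Kraus constraints $\mathbf{A}_i \in \mathcal{A}_d$ and $\sum_i \mathbf{A}_i^* \mathbf{A}_i = \mathbf{I}$ are untouched), whence $\mathscr{P}_+ \subseteq \mathcal{P}_{\mathrm{PPT}}$ gives $\operatorname{MCL}_d(\mathscr{P}_+) \subseteq \operatorname{MCL}_d(\mathcal{P}_{\mathrm{PPT}})$. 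Since $\operatorname{MCL}_d(\mathscr{P}_+) = \mathscr{P}_0$ by Theorem~\ref{thm:normalized-MCLk-inclusion}, this yields $\mathscr{P}_0 \subseteq \operatorname{MCL}_d(\mathcal{P}_{\mathrm{PPT}})$. Combining the two inclusions gives the claimed equality; equivalently, the chain $\operatorname{MCL}_d(\mathscr{P}_+) \subseteq \mathcal{C}_{\mathrm{PPT},d} \subseteq \mathscr{P}_0$ noted before the statement collapses at its top end.

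Along this route there is no genuine analytic obstacle: the two facts invoked, that conjugation preserves positivity and that $\operatorname{MCL}_d$ is monotone, are both routine, and all the substantive content is inherited from Theorem~\ref{thm:normalized-MCLk-inclusion}. The conceptual point to isolate is why $d = \min(m,n)$ is the exact critical index: every unit vector $w \in \mathbb{C}^m \otimes \mathbb{C}^n$ satisfies $\operatorname{SR}(w) \le d$, so $d$ is the smallest value for which the OSR-constrained Kraus operators are rich enough to reach arbitrary rank-one projectors. If one preferred a proof internal to the PPT setting rather than routed through $\mathscr{P}_+$, the hard part would be the strict normalization $\sum_i \mathbf{A}_i^* \mathbf{A}_i = \mathbf{I}$ demanded by $\operatorname{MCL}_d$ (as opposed to the relaxed $\le \mathbf{I}$ of $\operatorname{CMCL}_d$): one would have to realize each target projector $\mathbf{P}_w = w w^*$ via a genuinely isometric family whose blocks have operator Schmidt rank at most $d$, mirroring the unitary construction of Corollary~\ref{thm:rank-one-in-MCL} but now with the entangling unitary $\mathbf{U}$ of operator Schmidt rank at most $d$. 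The lower bound $\operatorname{OSR}(\mathbf{U}) \ge \operatorname{SR}(w)$ forced by Lemma~\ref{srank} does not obstruct this since $\operatorname{SR}(w) \le d$, but actually exhibiting such a $\mathbf{U}$ is the delicate step. I therefore expect this direct construction to be the main obstacle, which is why I would present the monotonicity argument as the main line.
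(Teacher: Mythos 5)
Your proposal is internally valid given the paper's stated results, but it takes a genuinely different route from the paper's own proof, and the difference matters. Your upper inclusion is the same as the paper's Part~1; the divergence is in the lower inclusion. The paper never routes through Theorem~\ref{thm:normalized-MCLk-inclusion}: it proves $\mathscr{P}_0 \subseteq \operatorname{MCL}_d(\mathcal{P}_{\mathrm{PPT}})$ by an explicit, self-contained Kraus construction. For a unit vector $v$ it takes rank-one operators $\A_i := c\, e_i v^*$ with $e_i$ the product basis vectors of $\mathbb{C}^{mn}$ and $M := mn$, $c^2 M \le 1$; these satisfy $\operatorname{OSR}(\A_i) \le \operatorname{SR}(v) \le d$ precisely because the left factor $e_i$ is a product vector. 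It feeds them the PPT input $\tfrac{1}{c^2 M}\I$, and then restores the \emph{exact} normalization $\sum_i \A_i^*\A_i + \sum_j \mathbf{B}_j^*\mathbf{B}_j = \I$ by a padding family $\mathbf{B}_j := \sqrt{\mu_j}\, f_j u_j^*$ (built from the spectral decomposition of $\I - c^2 M\, vv^*$, with $f_j$ product vectors) whose inputs are the zero matrix, which lies in $\mathcal{P}_{\mathrm{PPT}}$. This padding-with-zero device is exactly the resolution of the difficulty you flagged at the end: one never needs an entangling unitary of low operator Schmidt rank, so the ``delicate step'' you anticipated simply does not arise. Your monotonicity route is shorter but buys less, because it inherits the paper's justification of the endpoint $\operatorname{MCL}_d(\mathscr{P}_+) = \mathscr{P}_0$, and that justification (in the proof of Theorem~\ref{thm:normalized-MCLk-inclusion}) rests on the assertion $\mathcal{A}_d = \mathbb{M}_{mn}$, which is false for $m,n \ge 2$: operator Schmidt rank can be as large as $\min(m,n)^2$ (for instance, when $m=n$ the swap operator has operator Schmidt rank $n^2$), so $\mathcal{A}_d \subsetneq \mathbb{M}_{mn}$. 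The endpoint statement itself is true, but the clean way to see it is the explicit construction above, whose inputs $\I$ and $0$ are separable and so apply verbatim to $\mathscr{P}_+$. In short, the paper's direct construction simultaneously proves this theorem and fills the gap your citation silently relies on; if you keep your route, you should re-derive the endpoint by that construction rather than cite it.
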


\begin{proof}
The proof requires demonstrating two inclusions: $\operatorname{MCL}_{d}(\mathcal{P}_{\mathrm{PPT}}) \subseteq \mathscr{P}_0$ and $\mathscr{P}_0 \subseteq \operatorname{MCL}_{d}(\mathcal{P}_{\mathrm{PPT}})$.

Part 1: $\operatorname{MCL}_{d}(\mathcal{P}_{\mathrm{PPT}}) \subseteq \mathscr{P}_0$. We show that any element $\mathbf{Y}$ generated by the $\operatorname{MCL}_{d}$ operation is positive semi-definite.
The generating set for $\operatorname{MCL}_{d}(\mathcal{P}_{\mathrm{PPT}})$ consists of matrices $\mathbf{Y}$ of the form:
\[
\mathbf{Y} = \sum_{i=1}^N \mathbf{A}_i^* \mathbf{X}_i \mathbf{A}_i,
\]
where $\mathbf{X}_i \in \mathcal{P}_{\mathrm{PPT}}$. By definition, $\mathcal{P}_{\mathrm{PPT}}$ is a subcone of $\mathscr{P}_0$, so every $\mathbf{X}_i$ is positive semi-definite ($\mathbf{X}_i \ge 0$). Hence $\mathbf{Y}$ must be  itself positive semi-definite. The closure of this set is $\mathscr{P}_0$.
Thus, $\operatorname{MCL}_{d}(\mathcal{P}_{\mathrm{PPT}}) \subseteq \mathscr{P}_0$.

\medskip\noindent
Part 2: $\mathscr{P}_0 \subseteq \operatorname{MCL}_{d}(\mathcal{P}_{\mathrm{PPT}})$.
Since $\mathscr{P}_0$ is the closed conic hull of rank-one projectors,
it is enough to show that every vector $v\in\mathbb C^{mn}$ with
$\|v\|=1$ satisfies
\[
    vv^* \in \operatorname{MCL}_{d}(\mathcal{P}_{\mathrm{PPT}}).
\]
Let $M := mn$ and let $\{e_1,\dots,e_M\}$ be the standard product basis
of~$\mathbb C^{mn}$.  Choose a scalar $c>0$ such that $c^2 M \le 1$ (e.g.
$c = 1/\sqrt{2M}$).
For each $i=1,\dots,M$ we  define
\[
   \A_i := c\, e_i v^*    \qquad\mbox{and}\qquad
   \X_i := \frac{1}{c^2 M}\, \I.
\]
Each $\A_i$ is rank-one, hence
\(
   \operatorname{OSR}(\A_i) \le d=\min(m,n).
\)
Since $\I$ is positive partial transpose,  each $\X_i$ lies in $\mathcal{P}_{\mathrm{PPT}}$.
A direct computation gives
\[
    \A_i^* \A_i = c^2 v v^*
    \qquad\mbox{and}\qquad
    \sum_{i=1}^M \A_i^* \A_i = c^2 M\, vv^*.
\]
Also,
\[
    \sum_{i=1}^M \A_i^* \X_i \A_i
    = \frac{1}{c^2 M}\sum_{i=1}^M \A_i^* \A_i
    = vv^*.
\]
Thus these operators already produce the correct output $vv^*$, but they
are not yet normalized.
\smallskip
Define the positive operator
\[
    \mathbf{R} := \I - c^2 M\, vv^* \;\ge 0.
\]
Let
\(
    \mathbf{R} = \sum_{j=1}^{M} \mu_j u_j u_j^*
\)
be a spectral decomposition.  For each $j$, let $f_j$ be the $j$-th
vector of a product basis and set
\[
    \mathbf{B}_j := \sqrt{\mu_j}\, f_j u_j^*
    \qquad\mbox{and}\qquad
    \X'_j := 0.
\]
Each $\mathbf{B}_j$ is rank-one, so $\operatorname{OSR}(\mathbf{B}_j)\le d$, and $\X'_j=0$
is PPT.  Moreover,
\[
   \sum_{j=1}^M  \mathbf{B}_j^*  \mathbf{B}_j
   = \sum_{j=1}^M \mu_j u_j u_j^*
   = \mathbf{R}.
\]
Combining all coefficients we have
\[
   \sum_{i=1}^M \A_i^* \A_i + \sum_{j=1}^M \mathbf{B}_j^* \mathbf{B}_j
   = c^2 M\, vv^* + \mathbf{R}
   = \I,
\]
so the $C^*$-convex normalization is satisfied.  The output is
\[
   \sum_{i=1}^M \A_i^* \X_i \A_i
   \;+\; \sum_{j=1}^M \mathbf{B}_j^* \X'_j \mathbf{B}_j
   = vv^* + 0
   = vv^* .
\]
Thus $vv^* \in \operatorname{MCL}_{d}(\mathcal{P}_{\mathrm{PPT}})$.

Finally, every $\Y\in\mathscr P_0$ has the form
$\Y=\sum_k \lambda_k\, v_k v_k^*$ with $\lambda_k\ge0$.
Since $\operatorname{MCL}_{d}(\mathcal{P}_{\mathrm{PPT}})$ is a closed cone
and contains each $v_k v_k^*$, it contains $\Y$.
Hence
\[
   \mathscr{P}_0 \subseteq \operatorname{MCL}_{d}(\mathcal{P}_{\mathrm{PPT}}).
\]

Together with Part~1, this proves
\(
    \operatorname{MCL}_{d}(\mathcal{P}_{\mathrm{PPT}})
    = \mathscr{P}_0.
\)
\end{proof}
\begin{remark}
The proof of Theorem~\ref{collapse-MCL(PPT)} uses the assumption
$\operatorname{OSR}(A_i)\le d=\min(m,n)$ in an essential way: it is the
smallest value of $d$ for which our explicit Kraus construction satisfies
the locality requirement. In particular, the argument does not show that
the equality
\[
  \operatorname{MCL}_{d}(\mathcal{P}_{\mathrm{PPT}})=\mathscr{P}_0
\]
fails for any $d<\min(m,n)$; it only shows that our construction cannot
work with a smaller locality parameter. Whether the collapse phenomenon
continues to hold for some $d<\min(m,n)$ is not addressed here and
remains an interesting question.
\end{remark}


\section{Conclusion and Outlook}

In this work, we provided a systematic geometric analysis of the separable cone ($\mathscr{P}_+$) and the PPT cone ($\mathcal{P}_{\mathrm{PPT}}$) under ``$C^*$-convexity", a non-commutative generalization of classical convexity. Our primary finding establishes a sharp dichotomy regarding the algebraic stability of entanglement classes.

We demonstrated that the full $C^*$-convex hull of the separable cone collapses to the largest positive cone, $\operatorname{MCL}(\mathscr{P}_+) = \mathscr{P}_0$, confirming the ``algebraic fragility of separability"  under global quantum operations. Conversely, the separable cone remains stable under local $C^*$-convex combinations.

To interpolate between these extremes, we introduced the hierarchy of ``$k$-$C^*$-convexity" by constraining the operator Schmidt rank of the $C^*$-coefficients. This hierarchy generates nested intermediate cones $\operatorname{MCL}_k(\cdot)$. We proved that for the separable cone, this construction precisely characterizes the well-known Schmidt number cones, providing a novel algebraic characterization:
\begin{equation}
    \operatorname{MCL}_k(\mathscr{P}_+) = \mathcal{T}_k
\end{equation}
This result unifies the algebraic structure of $C^*$-convexity with the resource theory of entanglement based on Schmidt rank.

\subsection*{Future Work and Conjectures}

The most immediate open problem is the characterization of the intermediate cones generated from the PPT cone, $\mathcal{C}_{\mathrm{PPT}, k} = \operatorname{MCL}_k(\mathcal{P}_{\mathrm{PPT}})$. We conjecture that for $1 < k < k_c$ (where $k_c$ is the critical rank for full collapse), these cones are  non-trivial intermediate cones  strictly nested between $\mathcal{P}_{\mathrm{PPT}}$ and $\mathscr{P}_0$.

Future efforts should focus on characterizing the  dual cones, $\mathcal{C}_{\mathrm{PPT}, k}^{\circ}$. This characterization will define a new family of geometrical tools: ``Tensor Rank $k$ $C^*$-Convex Witnesses". These witnesses hold the potential to provide novel and tighter separability criteria for detecting different degrees of bound entanglement, advancing the understanding of the geometric structure near the boundary of $\mathcal{P}_{\mathrm{PPT}}$.

 \medskip

\noindent\textit{Data Availability Statement.}
All data necessary to evaluate the conclusions in the paper are present in the paper itself.

  \medskip
\bibliographystyle{plain}

\end{document}